\theoremstyle{plain}
\newtheorem{thm}{\protect\theoremname}
  \theoremstyle{definition}
  \newtheorem{defn}[thm]{\protect\definitionname}
  \theoremstyle{plain}
  \newtheorem{lem}[thm]{\protect\lemmaname}
  \theoremstyle{remark}
  \newtheorem{rem}[thm]{\protect\remarkname}
  \theoremstyle{plain}
  \newtheorem{cor}[thm]{\protect\corollaryname}
\def\frontmatter@abstractheading{}
\newcommand{\se}{\mathsf{s}_{0}}
\newcommand{\so}{\mathsf{s}_{1}}
\providecommand{\definitionname}{Definition}
  \providecommand{\lemmaname}{Lemma}
\providecommand{\theoremname}{Theorem}
\providecommand{\corollaryname}{Corollary}
  \providecommand{\definitionname}{Definition}
  \providecommand{\lemmaname}{Lemma}
  \providecommand{\remarkname}{Remark}
\providecommand{\theoremname}{Theorem}
\providecommand{\definitionname}{Definition}
\providecommand{\theoremname}{Theorem}
\providecommand{\definitionname}{Definition}
\providecommand{\theoremname}{Theorem}
\providecommand{\definitionname}{Definition}
\providecommand{\theoremname}{Theorem}
  \providecommand{\corollaryname}{Corollary}
  \providecommand{\definitionname}{Definition}
  \providecommand{\lemmaname}{Lemma}
  \providecommand{\remarkname}{Remark}
\providecommand{\theoremname}{Theorem}
\begin{document}

\title{Necessary and Sufficient Conditions for Extended Noncontextuality\\
 in a Broad Class of Quantum Mechanical Systems}

\author{Janne V.\ Kujala}

\email{To whom correspondence should be addressed. E-mail: jvk@iki.fi}

\affiliation{Department of Mathematical Information Technology, University of
Jyväskylä, Jyväskylä, Finland}

\author{Ehtibar N.\ Dzhafarov}

\email{E-mail: ehtibar@purdue.edu }

\affiliation{Department of Psychological Sciences, Purdue University, West Lafayette,
Indiana, USA}

\author{Jan-Åke Larsson}

\email{E-mail: jan-ake.larsson@liu.se}

\affiliation{Department of Electrical Engineering, Linköping University, 58183
Linköping, Sweden }
\begin{abstract}
The notion of (non)contextuality pertains to sets of properties measured
one subset (context) at a time.  We extend this notion to include
so-called inconsistently connected systems, in which the measurements
 of a given property in different contexts may have different distributions,
due to contextual biases in experimental design or physical interactions
(signaling): a system of measurements has a maximally noncontextual
description if they can be imposed a joint distribution on in which
the measurements of any one property in different contexts are equal
to each other with the maximal probability allowed by their different
distributions. We derive necessary and sufficient conditions for the
existence of such a description in a broad class of  systems including
Klyachko-Can-Binicio\u{g}lu-Shumvosky-type (KCBS), EPR-Bell-type,
and Leggett-Garg-type systems. Because these conditions allow for
inconsistent connectedness, they are applicable to real experiments.
We illustrate this by analyzing an experiment by Lapkiewicz and colleagues
aimed at testing contextuality in a KCBS-type system.

\textsc{Keywords:} CHSH inequalities; contextuality; criterion of
contextuality; Klyachko-Can-Binicio\u{g}lu-Shumvosky inequalities;
Leggett-Garg inequalities; measurement bias; measurement errors; probabilistic
couplings; signaling.

\markboth{Kujala, Dzhafarov, Larsson}{Criterion of Contextuality} 
\end{abstract}
\maketitle
The notion of (non)contextuality in Quantum Mechanics (QM) relates
the outcome of a measurement of a physical property $q$ to the choice
of properties $q',q'',\ldots$ co-measured with $q$ \cite{KochenSpecker1967}.
The set of co-measured properties $q,q',q'',\ldots$ forms a \emph{measurement
context} for each of its members. The traditional understanding of
a contextual QM system is that if the measurement of each property
$q$ in it is represented by a random variable $R_{q}$, then the
random variables representing all properties in the system do not
have a joint distribution.

We use here a different formulation, which, although formally equivalent,
lends itself to more productive development \cite{Larsson2002,DK2013PLOS,DK2014FooPh,DK2014PLOS,DK2015PhSc,DK2013LNCS}.
We label all measurements contextually: this means that a property
$q$ is represented by different random variables $R_{q}^{c}$ depending
on the context $c=\left\{ q,q',q'',\ldots\right\} $. We say that
the system has \emph{a noncontextual} description if there exists
a joint distribution of these random variables in which any two of
them, $R_{q}^{c_{1}}$ and $R_{q}^{c_{2}}$, representing the same
property $q$ in different contexts, are equal with probability 1.
If no such description exists we say that the system is \emph{contextual}.
Note that the existence of a joint distribution of several random
variables is equivalent to the possibility of presenting them as functions
of a single, ``hidden'' variable $\lambda$ \cite{SuppesZanotti1981,Fine1982,Larsson2002,DK2010}.

This formulation applies to systems in which the random variables
$R_{q}^{c_{1}},R_{q}^{c_{2}},\ldots$ representing a given property
in different contexts always have the same distribution. We call such
systems \emph{consistently connected}, because we call the set of
all such variables $R_{q}^{c_{1}},R_{q}^{c_{2}},\ldots$ for a given
$q$ a \emph{connection}. If the properties forming any given context
are space-time separated, consistent connectedness coincides with
the no-signaling condition \cite{PopescuRohrlich}. The central aim
of this paper is to extend the notion of contextuality to the cases
of \emph{inconsistent connectedness,} where the measurements of a
given property may have different distributions in different contexts.
This may happen due to a contextually biased measurement design or
due to physical influences exerted on $R_{q}^{c}$ by elements of
context $c$ other than $q$.

The criterion of (necessary and sufficient conditions for) contextuality
we derive below is formulated for inconsistently connected systems,
treating consistent connectedness as a special case. This makes it
applicable to real experimental data. For example, the experiment
in Ref.~\cite{Lapkiewicz2011} testing the Klyachko-Can-Binicio\u{g}lu-Shumvosky
(KCBS) inequality \cite{Klyachko2008} exhibits inconsistent connectedness,
necessitating a sophisticated work-around to establish contextuality
(see Refs.~\cite{Lapkiewicz2013,Ahrens2013}). Below, we apply our
extended notion to the same data to establish contextuality directly,
with no work-arounds. Another example is Leggett-Garg (LG) systems
\cite{LeggettGarg1985}, where our approach allows for the possibility
that later measurements may be affected by previous settings (``signaling
in time,'' \cite{Bacciagaluppi,Kofler2013}). Finally, in EPR-Bell-type
systems \cite{Bell1964,Bell1966} our approach allows for the possibility
that Alice's measurements are affected by Bob's settings \cite{Bacon-Toner 2003}
when they are time-like separated; and even with space-like separation,
the same effect can be caused by systematic errors \cite{Adenier-Khrennikov 2007}.

\emph{Earlier treatments.---} In the Kochen-Specker theorem \cite{KochenSpecker1967}
or its variants \cite{Peres1995,Cabello1996}, contexts are chosen
so that each property enters in more than one context, and in each
context, according to QM, one and only one of the measurements has
a nonzero value. The proof of contextuality, using our language, consists
in showing that the variables $R_{q}^{c}$ cannot be jointly assigned
values consistent with this constraint so that all the variables representing
the same property $q$ are assigned the same value. An experimental
test of contextuality here consists in simply showing that the observables
it specifies can be measured in the contexts it specifies, and that
the QM constraint in question is satisfied.

There has been recent work translating the value assignment proofs
into probabilistic inequalities (sometimes called \emph{Kochen-Specker
inequalities}) giving necessary conditions for noncontextuality \cite{Simon2001,Larsson2002}.
Inequalities that do not use value-assignment restrictions but only
the assumption of noncontextuality are known as \emph{noncontextuality
inequalities}~\cite{Cabello2008,Klyachko2008,Yu2012}. Bell inequalities
\cite{Bell1964,Bell1966,CHSH,CH1974,Fine1982} and LG inequalities
\cite{SuppesZanotti1981,LeggettGarg1985} are also established through
noncontextuality \cite{Mermin1993}, motivated by specific physical
considerations (locality and noninvasive measurement, resp.).

An extension of the notion of (non)contextuality that allows for inconsistent
connectedness was suggested in Refs.~\cite{Larsson2002,Winter2014}.
However, the error probability proposed in those papers as a measure
of context-dependent change in a random variable cannot be measured
experimentally. The suggestion in both Refs.~\cite{Larsson2002,Winter2014}
is to estimate the accuracy of the measurement and from that argue
for a particular value of the error probability. For example, Ref.~\cite{Winter2014}
uses the quantum description of the system for the estimate (quantum
tomography), but there is no clear reason why or how the quantum error
model would be related to that of the proposed noncontextual description.
A noncontextuality test should not mix the two descriptions, as it
attempts to show their fundamental differences.

In this paper we generalize the definition of contextuality in a different
manner, to allow for inconsistent connectedness while only using directly
measurable quantities. We derive a criterion of (non)contextuality
for a broad class of systems that includes as special cases the systems
intensively studied in the recent literature on contextuality: KCBS,
EPR-Bell, and LG systems \cite{Klyachko2008,Cabello2013,Cabello2014},
with their inconsistently connected versions \cite{DK_Bell_LG_K,DK_Bell_LG}.

\emph{Basic Concepts and Definitions.---} We begin by formalizing
the notation and terminology. Consider a finite set of distinct \emph{physical
properties $Q=\left\{ q_{1},\ldots,q_{n}\right\} $.} These properties
are measured in subsets of $Q$ called \emph{contexts}, $c_{1},\ldots,c_{m}$.
Let $C$ denote the set of all contexts, and $C_{q}$ the set of all
contexts containing a given property $q$.

The result of measuring property $q$ in context $c$ is a random
variable $R_{q}^{c}$ . The result of jointly measuring all properties
within a given context $c\in C$ is a set of jointly distributed random
variables $R^{c}=\left\{ R_{q}^{c}:q\in c\right\} $.

No two random variables in different contexts, $R_{q}^{c},R_{q'}^{c'}$,
$c\not=c'$, are jointly distributed, they are \emph{stochastically
unrelated} \cite{DK2013LNCS,DK2015PhSc}. The set of random variables
representing the same property $q$ in different contexts is called
a \emph{connection} (for $q$). So the elements of a connection $\left\{ R_{q}^{c}:c\in C_{q}\right\} $
are pairwise stochastically unrelated. If all random variables within
each connection are identically distributed, the system is called
\emph{consistently connected}; if it is not necessarily so, it is
\emph{inconsistently connected}. Consistent connectedness is also
known in QM as the \emph{Gleason property} \cite{CabelloSeverini2014},
outside physics as \emph{marginal selectivity} \cite{DK2013LNCS},
and Ref.~\cite{Cereceda2000} lists some dozen names for the same
notion; a recent addition to the list is \emph{no-disturbance principle}
\cite{Ramanathan2012,Kurzynski2014}.

The set $Q$ of all properties together with the set $C$ of all contexts
and the set $\left\{ R^{c}:c\in C\right\} $ of all sets of random
variables representing contexts is referred to as a \emph{system}.
In the systems we consider here the set of properties $q$ is finite
(whence the set of contexts $c$ is finite too), and each random variable
has a finite number of possible values (e.g., spin measurement outcomes).

We introduce next the notion of a (probabilistic) \emph{coupling}
of all the random variables $R_{q}^{c}$ in our system \cite{Thor}.
Intuitively, this is simply a joint distribution imposed, or ``forced''
on all of them (recall that they include stochastically unrelated
variables from different contexts). Formally, a coupling of $\left\{ R_{q}^{c}:q\in c\in C\right\} $
is any jointly distributed set of random variables $S=\left\{ S_{q}^{c}:q\in c\in C\right\} $
such that, for every $c\in C$, $\left\{ S_{q}^{c}:q\in c\right\} \sim\left\{ R_{q}^{c}:q\in c\right\} $,
where $\sim$ stands for ``has the same (joint) distribution as.''
One can also speak of a coupling for any subset of the random variables
$R_{q}^{c}$. Thus, fixing a property $q$, a coupling of a connection
$\left\{ R_{q}^{c}:c\in C_{q}\right\} $ is any jointly distributed
$\left\{ X_{q}^{c}:c\in C_{q}\right\} $ such that $X_{q}^{c}\sim R_{q}^{c}$
for all contexts $c\in C_{q}$. Note that if $S$ is a coupling of
all $R_{q}^{c}$, then every marginal (jointly distributed subset)
$\left\{ S_{q}^{c}:c\in C_{q}\right\} $ of $S$ is a coupling of
the corresponding connection $\left\{ R_{q}^{c}:c\in C_{q}\right\} $.

Expressed in this language, the traditional approach is to consider
a system\emph{ noncontextual} if there is a coupling $S$ of the random
variables $R_{q}^{c}$, such that for every property $q$ the random
variables in $\left\{ S_{q}^{c}:c\in C_{q}\right\} $ are equal to
each other with probability 1. That is, for every possible coupling
$S$ of the random variables $R_{q}^{c}$ and every property $q$
we consider the marginal $\left\{ S_{q}^{c}:c\in C_{q}\right\} $
corresponding to a connection $\left\{ R_{q}^{c}:c\in C_{q}\right\} $,
and we compute 
\begin{equation}
\Pr\left[S_{q}^{c_{q1}}=\cdots=S_{q}^{c_{qn_{q}}}\right],\;\left\{ c_{q1},\ldots,c_{qn_{q}}\right\} =C_{q}.\label{eq:equality in connection}
\end{equation}
If there exists a coupling $S$ for which this probability equals
1 for all $q$, this $S$ provides a noncontextual description for
our system. Otherwise, if in every possible coupling $S$ the probability
in question is less than 1 for some properties $q$, the system is
considered \emph{contextual}.

This understanding, however, only involves consistently connected
systems. As mentioned in the introduction, a system may be inconsistently
connected due to systematic biases or interactions (such as ``signaling
in time'' in LG systems). If for some $q$ and some contexts $c,c'\in C_{q}$,
the distribution of $R_{q}^{c}$ and $R_{q}^{c'}$ are not the same,
then $\Pr\big[S_{q}^{c}=S_{q}^{c'}\big]$ cannot equal 1 in any coupling
$S$. There would be nothing wrong if one chose to say that any such
inconsistently connected system is therefore contextual, but contextuality
due to systematic measurement errors or signaling is clearly a special,
trivial kind of contextuality. One should be interested in whether
the system exhibits any contextuality that is not reducible to (or
explainable by) the factors that make distributions of random variables
within a connection different. For systems in general therefore we
propose a different definition. 
\begin{defn}
\label{def:Main}A system has a \emph{maximally noncontextual description}
if there is a coupling $S$ of the random variables $R_{q}^{c}$,
such that for any $q$ the random variables $\left\{ S_{q}^{c}:c\in C_{q}\right\} $
in $S$ are equal to each other with the maximum probability allowed
by the individual distributions of $R_{q}^{c}$. 
\end{defn}
To explain, consider a connection $\left\{ R_{q}^{c}:c\in C_{q}\right\} $
in isolation, and let $\left\{ X_{q}^{c}:c\in C_{q}\right\} $ be
its coupling. Among all such couplings there must be \emph{maximal}
ones, those in which the probability that all variables in $\left\{ X_{q}^{c}:c\in C_{q}\right\} $
are equal to each other is maximal possible, given the distributions
of $X_{q}^{c}\sim R_{q}^{c}$. If a connection consists of two dichotomic
($\pm1$) variables $R_{q}^{1}$ and $R_{q}^{2}$, and $\left\{ X_{q}^{1},X_{q}^{2}\right\} $
is its coupling (i.e., $X_{q}^{1},X_{q}^{2}$ are jointly distributed
with $\left\langle X_{q}^{1}\right\rangle =\left\langle R_{q}^{1}\right\rangle $,
$\left\langle X_{q}^{2}\right\rangle =\left\langle R_{q}^{2}\right\rangle $),
then by Lemma A3 in Supplementary Material, the maximal possible expectation
$\left\langle X_{q}^{1}X_{q}^{2}\right\rangle $ is $1-\left|\left\langle R_{q}^{1}\right\rangle -\left\langle R_{q}^{2}\right\rangle \right|$;
a coupling $\left\{ X_{q}^{1},X_{q}^{2}\right\} $ with this expectation
is maximal. Now take every possible coupling $S$ of all our random
variables $R_{q}^{c}$, consider the marginals $\left\{ S_{q}^{c}:c\in C_{q}\right\} $
corresponding to connections $\left\{ R_{q}^{c}:c\in C_{q}\right\} $,
and for each of these marginals compute the probability (\ref{eq:equality in connection}).
If there is a coupling $S$ in which this probability equals its maximal
possible value for every $q$, this $S$ provides a maximally noncontextual
description for our system. For consistently connected systems Definition
1 reduces to the traditional understanding: the maximal probability
with which all variables in $\left\{ X_{q}^{c}:c\in C_{q}\right\} $
can be equal to each other is 1 if all these variables are identically
distributed.

\emph{Cyclic systems of dichotomic random variables.---} We focus
now on systems in which: (S1) each context consists of precisely two
distinct properties; (S2) each property belongs to precisely two distinct
contexts; and (S3) each random variable representing a property is
dichotomic ($\pm1$). As shown in Lemma A1 (Supplementary Material),
a set of properties satisfying S1--S2 can be arranged into one or
more distinct cycles $q_{1}\rightarrow q_{2}\rightarrow\ldots\rightarrow q_{k}\rightarrow q_{1}$,
in which any two successive properties form a context. Without loss
of generality we will assume that we deal with a \emph{single-cycle}
arrangement $q_{1}\rightarrow q_{2}\rightarrow\ldots\rightarrow q_{n}\rightarrow q_{1}$
of all the properties $\left\{ q_{1},\ldots,q_{n}\right\} $. The
number $n$ is referred to as the \emph{rank} of the system.

A schematic representation of a cyclic system is shown in Figure \ref{fig:n arbitrary}.
The LG paradigm exemplifies a cyclic system of rank $n=3$, on labeling
the observables $q_{1},q_{2},q_{3}$ measured chronologically. The
contexts $\left\{ q_{1},q_{2}\right\} ,\left\{ q_{2},q_{3}\right\} ,\left\{ q_{3},q_{1}\right\} $
here are represented by, respectively, pairs $\left(R_{1}^{1},R_{2}^{1}\right),\left(R_{2}^{2},R_{3}^{2}\right),\left(R_{3}^{3},R_{1}^{3}\right)$
with observed joint distributions, whereas $\left(R_{1}^{1},R_{1}^{3}\right),\left(R_{2}^{2},R_{2}^{1}\right),\left(R_{3}^{3},R_{3}^{2}\right)$
are connections for $q_{1},q_{2},q_{3}$, respectively. The EPR-Bell
paradigm exemplifies a cyclic system of rank $n=4$, on labeling the
observables $q_{1},q_{3}$ for Alice and $q_{2},q_{4}$ for Bob. Cyclic
systems of rank $n=5$ are exemplified by the KCBS paradigm, on labeling
the vertices of the KCBS pentagram by $q_{1}\rightarrow q_{2}\rightarrow q_{3}\rightarrow q_{4}\rightarrow q_{5}$.

\begin{figure}
\begin{centering}
\includegraphics[clip,scale=0.35]{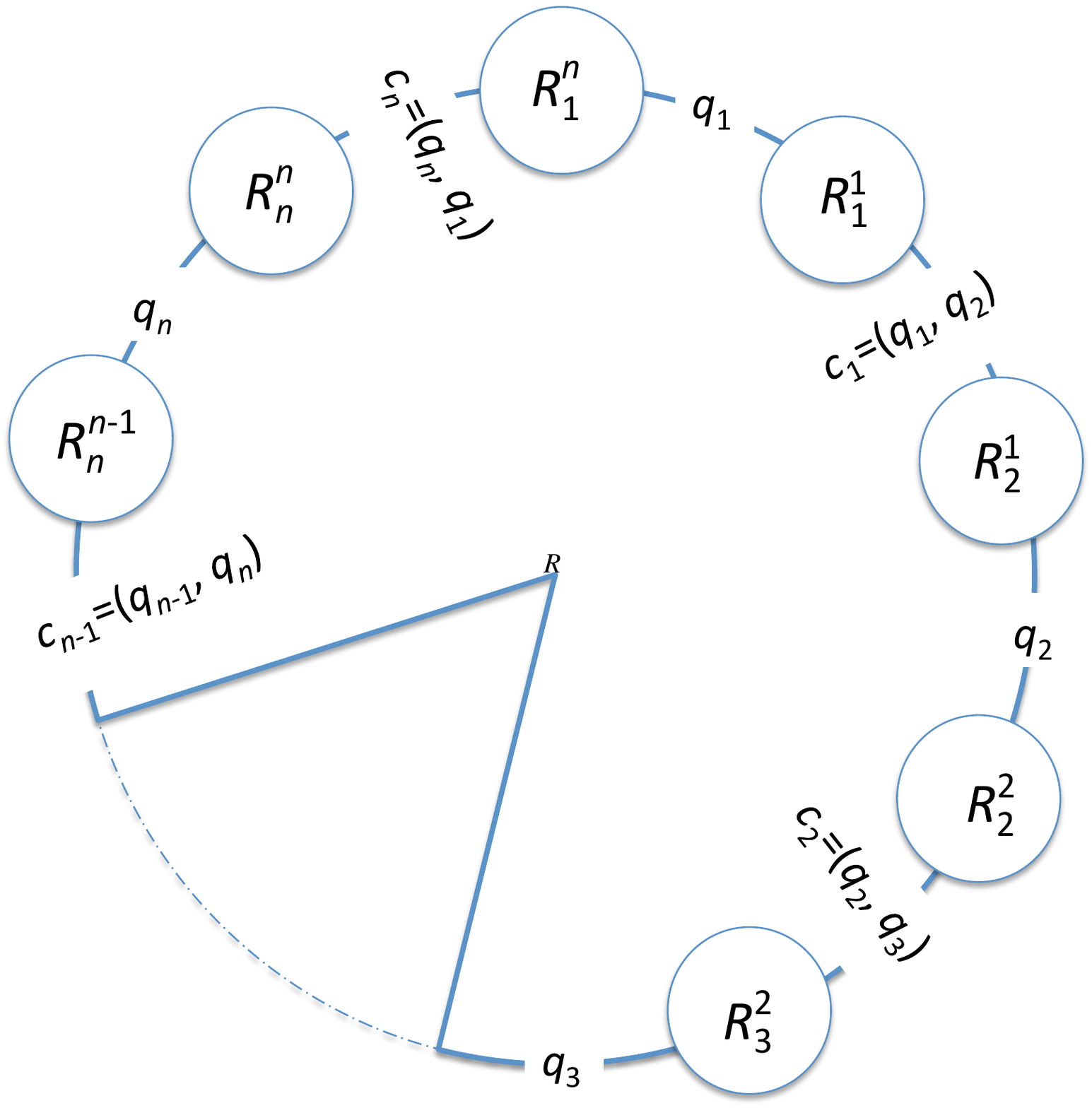} 
\par\end{centering}

\protect\caption{A schematic representation of a cyclic (single-cycle) system of rank
$n>1$. The properties $q_{1},\ldots,q_{n},q_{1}$ form a circle,
any two successive properties $\left(q_{i},q_{i\oplus1}\right)$ form
a context, denoted $c_{i}$ ($\oplus$ is clockwise shift $1\protect\mapsto2\protect\mapsto\ldots\protect\mapsto n\protect\mapsto1$).
In a given context $c_{i}$ the random variable representing $q_{i}$
is denoted $R_{i}^{i}$, and the one representing $q_{i\oplus1}$
is denoted $R_{i\oplus1}^{i}$. Each property $q_{i}$ therefore is
represented by two random variables: $R_{i}^{i}$ (when $q_{i}$ is
measured in context $c_{i}$) and $R_{i}^{i\ominus1}$ (when $q_{i}$
is measured in context $c_{i\ominus1}$). The pair $\left(R_{i}^{i\ominus1},R_{i}^{i}\right)$
is a connection for $q_{i}$, and the pair $\left(R_{i}^{i},R_{i\oplus1}^{i}\right)$
represents the context $c_{i}$.\label{fig:n arbitrary}}
\end{figure}

\emph{(Non)Contextuality Criterion.---} For any $n$, and any $x_{1},\dots,x_{n}\in\mathbb{R}$,
we define the function 
\begin{equation}
\so\left(x_{1},\ldots,x_{n}\right)=\max_{\iota_{1},\ldots,\iota_{n}\in\left\{ -1,1\right\} ,\prod_{k}\iota_{k}=-1}\sum_{k}\iota_{k}x_{k}.\label{eq:function so}
\end{equation}
The maximum is taken over all combinations of $\pm1$ coefficients
$\iota_{1},\ldots,\iota_{n}$ containing odd numbers of $-1$'s. The
following is our main theorem. 
\begin{thm}
\label{thm:Main Criterion}A cyclic system of rank $n>1$ with dichotomic
random variables (see Figure \ref{fig:n arbitrary}) has a maximally
noncontextual description if and only if 
\begin{equation}
\so\left(\left\langle R_{i}^{i}R_{i\oplus1}^{i}\right\rangle ,1-\left|\left\langle R_{i}^{i}\right\rangle -\left\langle R_{i}^{i\ominus1}\right\rangle \right|:i=1,\ldots,n\right)\le2n-2\label{eq:master}
\end{equation}
($\so$ here having $2n$ arguments, each entry being taken with $i=1,\ldots,n$
). 
\end{thm}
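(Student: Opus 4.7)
The plan is to recast the existence of a maximally noncontextual description as the existence of a joint distribution of $2n$ dichotomic $\pm 1$ random variables arranged on a cycle with prescribed individual means and prescribed consecutive pairwise expectations, and then apply a cyclic CHSH/Bell-type criterion. By Definition \ref{def:Main} together with the maximality result of Lemma A3, a maximally noncontextual description is a coupling $S$ of the $2n$ random variables $\{R_i^i,\,R_{i\oplus 1}^i:i=1,\ldots,n\}$ that (i) agrees with the observed joint distribution on each context pair $(R_i^i,R_{i\oplus 1}^i)$, and (ii) achieves $\langle S_i^{i\ominus 1}S_i^i\rangle = 1-|\langle R_i^i\rangle-\langle R_i^{i\ominus 1}\rangle|$ on every connection. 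Listing the variables cyclically as $Y_1=S_1^1,\ Y_2=S_2^1,\ Y_3=S_2^2,\ \ldots,\ Y_{2n}=S_1^n$, the problem becomes: construct a joint distribution of $2n$ dichotomic $\pm 1$ variables $Y_1,\ldots,Y_{2n}$ on a cycle whose individual means match those of the $R_q^c$ and whose $2n$ consecutive pairwise expectations $\langle Y_k Y_{k\oplus 1}\rangle$ equal the $2n$ arguments of $\so$ in (\ref{eq:master}).

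\textbf{Necessity.} For any $\pm 1$ random variables $Y_1,\ldots,Y_{2n}$ on a cycle, $\prod_{k=1}^{2n}Y_k Y_{k\oplus 1}=\prod_k Y_k^2=1$ almost surely. Hence for any sign vector $(\iota_k)$ with $\prod_k\iota_k=-1$ we have $\prod_k\iota_k Y_k Y_{k\oplus 1}=-1$, forcing at least one term to equal $-1$, so $\sum_k\iota_k Y_k Y_{k\oplus 1}\le 2n-2$ almost surely. Taking expectations and maximising over admissible $(\iota_k)$ gives $\so\bigl(\langle Y_k Y_{k\oplus 1}\rangle\bigr)\le 2n-2$, which for a coupling $S$ as above is precisely (\ref{eq:master}).

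\textbf{Sufficiency.} Assuming (\ref{eq:master}), I would build the joint distribution of $Y_1,\ldots,Y_{2n}$ by induction on $n$. The inductive step eliminates one vertex by marginalisation: given three consecutive variables $Y_1,Y_2,Y_3$, I would replace them by $Y_1,Y_3$ with an effective consecutive correlation obtained by maximising $\langle Y_1 Y_3\rangle$ over all joint distributions of $(Y_1,Y_2,Y_3)$ that agree with the prescribed pair marginals $(Y_1,Y_2)$ and $(Y_2,Y_3)$ --- an elementary convex-programming problem on $\pm 1$ triples. This collapses the $2n$-cycle into a $(2n-2)$-cycle; one then checks that the $\so$-inequality on the smaller cycle is implied by (\ref{eq:master}) on the original one. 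The base case is the $4$-cycle ($n=2$), where the criterion reduces to a Fine/CHSH-type condition established in \cite{DK_Bell_LG_K,DK_Bell_LG}.

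\textbf{Main obstacle.} The principal difficulty lies in the sufficiency, specifically in verifying that the inductive collapse preserves the master inequality --- that the effective correlation on the shortened cycle, combined with the surviving correlations, still satisfies the appropriate $\so$-inequality. This reduces to a case analysis of which sign pattern $(\iota_k)$ attains the maximum, together with a triangle-splicing identity for the maximal joint extension of two adjacent pair marginals. Equivalently, one may proceed polytopally: show that the $2^{2n-1}$ linear inequalities obtained from (\ref{eq:master}) by fixing $(\iota_k)$, together with the trivial box constraints inherited from the pair marginals, furnish a complete facet description of the correlation polytope of $\pm 1$ cycles with fixed individual means, ruling out any additional obstruction to realisability.
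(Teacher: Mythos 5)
Your overall architecture is the same as the paper's: reduce the existence of a maximally noncontextual description to the realizability of a single cycle of $2n$ dichotomic variables whose consecutive correlations are exactly the $2n$ arguments of $\so$ in \eqref{eq:master} (the paper does this via its Theorem A8 and Figure \ref{fig:A1-An-1}), prove necessity, and prove sufficiency by induction that shortens the cycle by one vertex. Your reformulation step is correct, and your necessity argument --- the parity identity $\prod_k Y_kY_{k\oplus1}=1$ forcing $\sum_k\iota_kY_kY_{k\oplus1}\le 2n-2$ pointwise whenever $\prod_k\iota_k=-1$ --- is complete and in fact more elementary than the paper's, which obtains necessity only as a by-product of the induction.

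The sufficiency direction, however, has a genuine gap, and it sits exactly where you locate your ``main obstacle.'' Your prescription --- replace $Y_1,Y_2,Y_3$ by $Y_1,Y_3$ with the correlation obtained by \emph{maximising} $\langle Y_1Y_3\rangle$ over extensions of the two pair marginals --- is wrong in general: the chord value must lie in the intersection of two two-sided intervals, one coming from the triangle $(Y_1,Y_2,Y_3)$ (namely $\se(\langle Y_1Y_2\rangle,\langle Y_2Y_3\rangle)-1\le\langle Y_1Y_3\rangle\le 1-\so(\langle Y_1Y_2\rangle,\langle Y_2Y_3\rangle)$, plus the bounds imposed by the individual means) and one coming from realizability of the shortened cycle (namely $\se(\text{rest})-(2n-3)\le\langle Y_1Y_3\rangle\le(2n-3)-\so(\text{rest})$). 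The triangle maximum can easily exceed the cycle's upper bound, so the correct move is not to pick an extremal value but to prove that the intersection of these intervals is nonempty precisely when \eqref{eq:master} holds. That nonemptiness is established by pairing each lower bound against each upper bound and recombining the resulting cross-inequalities with the splitting identity $\so(a,\mathbf{y})=\max\{a+\so(\mathbf{y}),-a+\se(\mathbf{y})\}$ (Lemma \ref{lem:s0_s1_split}), discarding the always-true pairings via $\se(\mathbf{a})+\so(\mathbf{a})\le 2n-2$; this Fourier--Motzkin elimination is the entire technical content of the paper's proof of Theorem \ref{thm:A1-An}, and your sketch asserts rather than performs it. The polytopal alternative you mention (a complete facet description of the cyclic correlation polytope) is likewise a restatement of what must be proved, not a proof. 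Until the interval-intersection step is carried out, the sufficiency half of the criterion is not established.
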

See Supplementary Material for the proof. In \eqref{eq:master}, $\left\langle R_{i}^{i}R_{i\oplus1}^{i}\right\rangle $
are the quantum correlations observed within contexts, whereas $1-\left|\left\langle R_{i}^{i}\right\rangle -\left\langle R_{i}^{i\ominus1}\right\rangle \right|$
are the maximal values for the unobservable correlations within the
couplings of connections. If the system is consistently connected,
i.e., $\left\langle R_{i}^{i}\right\rangle =\left\langle R_{i}^{i\ominus1}\right\rangle $,
then these maximal values equal 1. By Corollary A10, the criterion
\eqref{eq:master} then reduces to the formula 
\begin{equation}
\so\left(\left\langle R_{i}^{i}R_{i\oplus1}^{i}\right\rangle :i=1,\ldots,n\right)\le n-2,\label{eq:CC}
\end{equation}
well-known for $n=3$ (the LG inequality in the form derived in Ref.~\cite{SuppesZanotti1981})
and for $n=4$ (CHSH inequalities \cite{CHSH}). For $n=5$, \eqref{eq:CC}
contains the KCBS inequality (which by Corollary A.11 is not only
necessary but also sufficient for the existence of a maximally noncontextual
description). Finally, for any even $n\geq4$, inequality \eqref{eq:CC}
contains the \emph{chained Bell inequalities} studied in Refs.~\cite{Pearle1970,BraunstenCaves1990}.
It is known that for $n>4$ the chained Bell inequalities are not
criteria, the latter requiring many more inequalities \cite{WW2001a,WW2001b,BasoaltoPercival2003,DK2012}.

Generally, some of the terms $\left\langle R_{i}^{i}\right\rangle -\left\langle R_{i}^{i\ominus1}\right\rangle $
in \eqref{eq:master} may be nonzero. Thus, in an LG system ($n=3$),
if inconsistency is due to ``signaling in time'' \cite{Bacciagaluppi,Kofler2013},
these may include $\left\langle R_{2}^{2}\right\rangle -\left\langle R_{2}^{1}\right\rangle $
and $\left\langle R_{3}^{3}\right\rangle -\left\langle R_{3}^{2}\right\rangle $
but not $\left\langle R_{1}^{1}\right\rangle -\left\langle R_{1}^{3}\right\rangle $,
because $q_{1}$ cannot be influenced by later events. However, $\left\langle R_{1}^{1}\right\rangle -\left\langle R_{1}^{3}\right\rangle $
may be nonzero due to contextual biases in design, if something in
the procedure of measuring $q_{1}$ is different depending on whether
the next measurement is going to be of $q_{2}$ or $q_{3}$.

\emph{An application to experimental data.---} To illustrate the applicability
of our theory to real experiments, consider the data from the KCBS
experiment of Ref.~\cite{Lapkiewicz2011}. The experiment uses a
single photon in a quantum overlap of three optical modes (paths)
as an indivisible quantum system. Readout is performed through single-photon
detectors that terminate the three paths. Context is chosen through
``activation\textquotedblright{} of transformations, by rotating
a wave-plate that precedes each beamsplitter to change the behavior
of two out of three paths. Each transformation leaves one path untouched,
which serves as justification for consistent connectedness of the
corresponding measurements, $\langle R_{i}^{i}\rangle=\langle R_{i}^{i\ominus1}\rangle$,
so that the target inequality is (\ref{eq:CC}) for $n=5$.

$R_{1}^{1}$ and $R_{1}^{5}$ are recorded in different experimental
setups with zero or four polarizing beamsplitters ``activated''.
These outputs have significantly different distributions: from Ref.~\cite{Lapkiewicz2011}
Table~1, $\langle R_{1}^{1}\rangle=.136(6)$, $\langle R_{1}^{5}\rangle=.172(4)$,
and taking them as means and standard errors of 20 replications, the
standard $t$-test with $\textnormal{df}=19$ is significant at 0.1\%.
Lapkiewicz et al., deal with this by introducing in (\ref{eq:CC})
a correction term involving $\langle R_{1}^{1}R_{1}^{5}\rangle$.
They estimate $\langle R_{1}^{1}R_{1}^{5}\rangle$ by identifying
$R_{1}^{1}$ with $R'_{1}$, an output measured in a separate context
and in a special manner: instead of photon detections it is measured
by blocking two paths early in the setup. While this results in a
well-motivated experimental test, the identification of $R'_{1}$
with $R_{1}^{1}$ involves additional assumptions \cite{Ahrens2013,Lapkiewicz2013}.
Furthermore, Lapkiewicz et al. have to discount the fact that the
assumption $\langle R_{i}^{i}\rangle=\langle R_{i}^{i\ominus1}\rangle$
can also be challenged for $i=4$: the same $t$-test as above for
$\langle R_{4}^{4}\rangle=.122(4)$ and $\langle R_{4}^{3}\rangle=.142(4)$
is significant at 1\%. We see that the traditional approach adopted
in Ref.~\cite{Lapkiewicz2011} encounters considerable experimental
and analytic difficulties due to the necessity of avoiding inconsistent
connectedness.

Our theory allows one to analyze the data directly as found in the
measurement record. It is convenient to do this by using the inequality
\begin{equation}
\so\big(\left\langle R_{i}^{i}R_{i\oplus1}^{i}\right\rangle :i=1,\ldots,n\big)-\sum_{i=1}^{n}\left|\left\langle R_{i}^{i}\right\rangle -\left\langle R_{i}^{i\ominus1}\right\rangle \right|\le n-2,\label{eq:simpleform}
\end{equation}
which, by Corollary A9, follows from the criterion \eqref{eq:master}
\cite{remark_conjecture}. One way of using it is to construct a conservative
$100\left(1-\alpha\right)\%$ confidence interval with, say, $\alpha=10^{-10}$
for the left-hand side of \eqref{eq:simpleform} with $n=5$ and show
that its lower endpoint exceeds $n-2=3$. One can, e.g., construct
10 Bonferroni $100\left(1-{\alpha}/{10}\right)\%$ confidence intervals
for each of the approximately normally distributed terms $\left\langle R_{i}^{i}R_{i\oplus1}^{i}\right\rangle $
and $\left\langle R_{i}^{i}\right\rangle -\left\langle R_{i}^{i\ominus1}\right\rangle $
($i=1,\ldots,5$), with respective error terms read or computed from
Table 1 of Ref.~\cite{Lapkiewicz2011}, and then determine the range
of \eqref{eq:simpleform}. Treating each estimated term as the mean
of 20 observations, we have $t_{1-\alpha/10}\left(19\right)<14$ and
so a conservative confidence interval for each term is given by $\pm14\times\text{standard error}$.
Using these intervals, we can calculate the conservative $100\left(1-10^{-10}\right)\%$
confidence interval for \eqref{eq:simpleform} as 
\begin{widetext}
\begin{equation}
\begin{split} & \so\Big(\overbrace{\left\langle R_{1}^{1}R_{2}^{1}\right\rangle }^{-.805\pm.028},\overbrace{\left\langle R_{2}^{2}R_{3}^{2}\right\rangle }^{-.804\pm.042},\overbrace{\left\langle R_{3}^{3}R_{4}^{3}\right\rangle }^{-.709\pm.042},\overbrace{\left\langle R_{4}^{4}R_{5}^{4}\right\rangle }^{-.810\pm.028},\overbrace{\left\langle R_{5}^{5}R_{1}^{5}\right\rangle }^{-.766\pm.028}\Big)\\
 & \quad-\big|\underbrace{\left\langle R_{1}^{1}\right\rangle -\left\langle R_{1}^{5}\right\rangle }_{-.036\pm.101}\big|-\big|\underbrace{\left\langle R_{2}^{2}\right\rangle -\left\langle R_{2}^{1}\right\rangle }_{-.004\pm.140}\big|-\big|\underbrace{\left\langle R_{3}^{3}\right\rangle -\left\langle R_{3}^{2}\right\rangle }_{.006\pm.126}\big|-\big|\underbrace{\left\langle R_{4}^{4}\right\rangle -\left\langle R_{4}^{3}\right\rangle }_{-.020\pm.080}\big|-\big|\underbrace{\left\langle R_{5}^{5}\right\rangle -\left\langle R_{5}^{4}\right\rangle }_{-.006\pm.080}\big|=[3.127,4.062].
\end{split}
\label{eq:1}
\end{equation}

\end{widetext}

The system is contextual. The conclusion is the same as in Ref.~\cite{Lapkiewicz2011},
but we arrive at it by a shorter and more robust route.

\emph{Conclusion.---} We have derived a criterion of (non)contextuality
applicable to cyclic systems of arbitrary ranks. Even for consistently
connected systems this criterion has not been previously known for
ranks $n\geq5$ (KCBS and higher-rank systems). However, it is the
inclusion of inconsistently connected systems that is of special interest,
because it makes the theory applicable to real experiments. A ``system''
is not just a system of properties being measured, but also a system
of measurement procedures being used, with possible contextual biases
and unaccounted-for interactions. Our analysis opens the possibility
of studying contextuality without attempting to eliminate these first,
whether by statistical analysis or by improved experimental procedure. 
\begin{acknowledgments}
This work is supported by NSF grant SES-1155956, AFOSR grant FA9550-14-1-0318,
A. von Humboldt Foundation, and FQXi through Silicon Valley Community
Foundation. We thank J. Acacio de Barros, Gary Oas, Samson Abramsky,
Guido Bacciagaluppi, Adán Cabello, Andrei Khrennikov, and Lasse Leskelä
for numerous discussions. \end{acknowledgments}

\clearpage{}\newpage{}

\global\long\def\thesection{A}

\section*{Supplementary Material to\protect \protect \\
 ``Necessary and Sufficient Conditions for Maximal Noncontextuality
in a Broad Class of Quantum Mechanical Systems.'' Proof of the main
criterion and its consequences}

\numberwithin{equation}{section}\numberwithin{thm}{section}\numberwithin{figure}{section}

\setcounter{equation}{0}\setcounter{thm}{0}

The (non)contextuality criterion derived in this main text is a corollary
to Theorem \ref{thm:A1-An} proved below. We first need the following
simple result (see properties S1 and S2 formulated in section \emph{Cyclic
systems of dichotomic random variables}): 
\begin{lem}
In a system satisfying S1-S2, the physical properties $\left\{ q_{1},\ldots,q\right\} $
can be (re)indexed and arranged in one or more non-overlapping cycles
\begin{equation}
\left(q_{11},\ldots,q_{1n_{1}},q_{11}\right),\left(q_{21},\ldots,q_{2n_{2}},q_{21}\right),\ldots,\left(q_{k1},\ldots,q_{kn_{k}},q_{k1}\right),
\end{equation}
with $n_{1}+\ldots+n_{k}=n$ and $n_{i}>2$ ($i=1,\ldots,k$), such
that any two successive properties in each cycle form a context. \end{lem}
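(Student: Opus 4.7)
The plan is to translate the combinatorial statement into a graph-theoretic one and then invoke the standard fact that every 2-regular simple graph decomposes into disjoint cycles.

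First I would define the \emph{incidence graph} $G=(V,E)$ whose vertex set is $V=Q=\{q_1,\ldots,q_n\}$ and whose edges are the contexts, so $E=\{c_1,\ldots,c_m\}$. Property S1 says each context contains exactly two distinct properties, so every $c_j$ is a well-defined edge between two distinct vertices; hence $G$ is a loopless graph. Since contexts are distinct subsets of $Q$, no two edges have the same endpoint pair, so $G$ is also simple (no multi-edges). Property S2 says each property belongs to exactly two contexts, which is exactly the statement that every vertex of $G$ has degree $2$, i.e.\ $G$ is 2-regular.

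Next I would invoke the standard elementary lemma that any finite 2-regular simple graph is a vertex-disjoint union of (simple) cycles. One can prove this quickly by picking any vertex $v_1$, walking along its two incident edges to produce a non-backtracking trail $v_1,v_2,v_3,\ldots$; finiteness forces a first repetition, and 2-regularity forces this repetition to close back to $v_1$, yielding a cycle. Removing the cycle leaves a graph that is still 2-regular on its remaining vertices, and the argument iterates until no vertices remain. This produces the desired partition $V=V_1\sqcup\cdots\sqcup V_k$, where each $V_i$ supports a cycle of some length $n_i$, giving the cyclic arrangements $(q_{i1},\ldots,q_{in_i},q_{i1})$ with successive pairs forming contexts.

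Finally, I would verify $n_i>2$. A 2-cycle would require two parallel edges between the same vertex pair, i.e.\ two contexts equal as subsets of $Q$ to the same $\{q_{i1},q_{i2}\}$, contradicting the distinctness of $c_1,\ldots,c_m$ as elements of the power set of $Q$. So every cycle has length at least $3$, and $n_1+\cdots+n_k=|V|=n$ by construction. There is no real obstacle here; the only delicate point is pinning down why $n_i>2$, which comes solely from the fact that contexts are listed as \emph{distinct} subsets of $Q$ rather than from S1 or S2 themselves.
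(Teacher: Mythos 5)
Your proof is correct. Note, though, that the paper does not actually give an argument here: its ``proof'' consists of the single sentence ``Apparent from Figure A1,'' where the figure depicts the admissible cyclic arrangements and two forbidden local configurations (a property in fewer than two contexts, and a property in more than two contexts). Your graph-theoretic formalization is the rigorous version of exactly that picture: S1 makes contexts into edges of a loopless simple graph on $Q$, S2 makes the graph 2-regular, and the standard walk argument decomposes a finite 2-regular simple graph into vertex-disjoint cycles. What your write-up adds beyond the figure is worth having: the explicit non-backtracking-trail construction of each cycle, and in particular the observation that the bound $n_i>2$ does not follow from S1--S2 alone but from the contexts being \emph{distinct} subsets of $Q$ (which rules out parallel edges and hence 2-cycles), with loops already excluded by the word ``distinct'' in S1. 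That is precisely the delicate point the paper's figure glosses over, and you have pinned it down correctly.
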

\begin{proof}
Apparent from Figure \ref{fig:arrangements}. 
\end{proof}
Our proof of Theorem \ref{thm:A1-An} uses the fact that the connections
and context representations enter a circular system symmetrically,
so that it is possible to view circular systems as a circular arrangement
of random variables $A_{1},\ldots,A_{n},A_{1}$ in which any two successive
variables have a joint distribution (see Figure \ref{fig:A1-An-1}).

We need some auxiliary results. In addition to $\so$ defined in the
main text, we use function

\begin{equation}
\se\left(x_{1},\ldots,x_{n}\right)=\max_{\iota_{1},\ldots,\iota_{n}\in\left\{ -1,1\right\} ,\prod_{k}\iota_{k}=1}\sum_{k}\iota_{k}x_{k},
\end{equation}
in which the maximum is taken over all combinations of $\pm1$ coefficients
$\iota_{1},\ldots,\iota_{n}$ containing even numbers of $-1$'s. 
\begin{lem}
\label{lem:s0_s1_split}For any $a_{1},\dots,a_{n},b_{1},\dots,b_{m}\in\mathbb{R}$,
\begin{equation}
\begin{array}{l}
\so(a_{1},\dots,a_{n},b_{1},\dots,b_{m})\\
=\max\begin{array}[t]{c}
\left\{ \begin{array}{c}
\se(a_{1},\dots,a_{n})+\so(b_{1},\dots,b_{m}),\\
\so(a_{1},\dots,a_{n})+\se(b_{1},\dots,b_{m})
\end{array}\right\} ,\end{array}
\end{array}
\end{equation}
and 
\end{lem}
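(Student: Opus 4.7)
The plan is to exploit the additive structure of the objective together with the multiplicative structure of the parity constraint. Writing
\[
\so(a_1,\dots,a_n,b_1,\dots,b_m) = \max_{(\iota_1,\dots,\iota_{n+m})} \sum_{k=1}^{n}\iota_k a_k + \sum_{k=1}^{m}\iota_{n+k} b_k
\]
subject to $\prod_{k=1}^{n+m}\iota_k = -1$, I would split each admissible sign vector into its $a$-block $\boldsymbol{\iota}^{(a)} = (\iota_1,\dots,\iota_n)$ and $b$-block $\boldsymbol{\iota}^{(b)} = (\iota_{n+1},\dots,\iota_{n+m})$. The objective is additive across this split, while the parity constraint factors as $\bigl(\prod_k \iota_k^{(a)}\bigr)\bigl(\prod_k \iota_k^{(b)}\bigr) = -1$.

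The key observation is that this product equals $-1$ in exactly two mutually exclusive ways: either $\prod \iota^{(a)} = +1$ and $\prod \iota^{(b)} = -1$, or $\prod \iota^{(a)} = -1$ and $\prod \iota^{(b)} = +1$. Because the two blocks are otherwise independent and appear in disjoint summands of the objective, the maximum over admissible vectors in the first case equals $\se(a_1,\dots,a_n) + \so(b_1,\dots,b_m)$, and in the second case equals $\so(a_1,\dots,a_n) + \se(b_1,\dots,b_m)$. Taking the larger of the two yields the claimed identity.

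The main (and only) step to carry out carefully is verifying the exhaustiveness and disjointness of the case split, i.e.\ that the constraint $\prod \iota_k = -1$ is equivalent to exactly one of the two parity combinations above, and that each inner maximization is genuinely unconstrained apart from the stated parity condition. Since $\{-1,+1\}^n$ decomposes as the disjoint union of the even-parity subset (on which $\se$ is defined) and the odd-parity subset (on which $\so$ is defined), this verification is immediate. I do not anticipate any real obstacle; the argument is essentially bookkeeping, and the statement should extend mutatis mutandis to $\se(a_1,\dots,a_n,b_1,\dots,b_m) = \max\{\se(a) + \se(b),\, \so(a) + \so(b)\}$ by the same case split applied to the constraint $\prod \iota_k = +1$, which is presumably what the trailing \textquotedblleft and\textquotedblright{} in the lemma statement introduces.
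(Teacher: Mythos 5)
Your argument is correct and is precisely the case split the paper has in mind; the paper itself simply declares the proof obvious, and your block decomposition of the sign vector with the parity factorization $\bigl(\prod_k\iota_k^{(a)}\bigr)\bigl(\prod_k\iota_k^{(b)}\bigr)=-1$ is the standard way to fill in that claim. Your reading of the trailing ``and'' is also right: the displayed identity for $\se$ that follows the lemma statement is proved by the identical split applied to the constraint $\prod_k\iota_k=+1$.
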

\begin{equation}
\begin{array}{l}
\se(a_{1},\dots,a_{n},b_{1},\dots,b_{m})\\
=\max\begin{array}[t]{c}
\left\{ \begin{array}{c}
\se(a_{1},\dots,a_{n})+\se(b_{1},\dots,b_{m}),\\
\so(a_{1},\dots,a_{n})+\so(b_{1},\dots,b_{m})
\end{array}\right\} .\end{array}
\end{array}
\end{equation}
The proof is obvious. 
\begin{lem}
\label{lem:joint-of-two}Jointly distributed $\pm1$-valued random
variables $A$ and $B$ with given expectations $\left\langle A\right\rangle ,\left\langle B\right\rangle ,\left\langle AB\right\rangle $
exist if and only if 
\begin{equation}
\begin{array}{c}
-1\le\left\langle A\right\rangle \le1,\\
-1\le\left\langle B\right\rangle \le1,\\
\left|\left\langle A\right\rangle +\left\langle B\right\rangle \right|-1\le\left\langle AB\right\rangle \le1-\left|\left\langle A\right\rangle -\left\langle B\right\rangle \right|.
\end{array}\label{eq:Lemma 2}
\end{equation}
\end{lem}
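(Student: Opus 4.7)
The plan is to parameterize every candidate joint distribution of two $\pm1$-valued variables by its four atomic probabilities and then read off the constraints. Write $p_{ab}=\Pr[A=a,B=b]$ for $a,b\in\{-1,+1\}$. The three expectations determine the four probabilities uniquely through
\begin{equation}
p_{ab}=\tfrac{1}{4}\bigl(1+a\langle A\rangle+b\langle B\rangle+ab\langle AB\rangle\bigr),
\end{equation}
and these automatically sum to $1$. Hence a jointly distributed $(A,B)$ with the prescribed expectations exists if and only if all four $p_{ab}$ lie in $[0,1]$, which (given that they sum to $1$) is equivalent to $p_{ab}\geq0$ for every $(a,b)$.

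Next I would translate the four non-negativity inequalities into the form stated in the lemma. Writing them out gives
\begin{equation}
\langle AB\rangle\geq \langle A\rangle+\langle B\rangle-1,\quad \langle AB\rangle\geq -\langle A\rangle-\langle B\rangle-1,
\end{equation}
\begin{equation}
\langle AB\rangle\leq 1+\langle A\rangle-\langle B\rangle,\quad \langle AB\rangle\leq 1-\langle A\rangle+\langle B\rangle.
\end{equation}
Taking the maximum of the two lower bounds and the minimum of the two upper bounds collapses them to the single sandwich $|\langle A\rangle+\langle B\rangle|-1\leq\langle AB\rangle\leq 1-|\langle A\rangle-\langle B\rangle|$, which is the third line of \eqref{eq:Lemma 2}.

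Finally I would note that the marginal bounds $-1\leq\langle A\rangle,\langle B\rangle\leq1$ are necessary by definition of a $\pm1$-valued variable, and conversely they are implied by (and in fact needed for) the interval $[|\langle A\rangle+\langle B\rangle|-1,\,1-|\langle A\rangle-\langle B\rangle|]$ to be non-empty; this closes the sufficiency direction. There is no real obstacle here, just bookkeeping around the absolute values; the only small subtlety is that conditions \eqref{eq:Lemma 2} are not independent (the third already forces a consistency relation on the first two when both endpoints are attainable), so one should be careful to state the proof as "the four probabilities are non-negative if and only if \eqref{eq:Lemma 2} holds," rather than treating the three lines of \eqref{eq:Lemma 2} as logically separate.
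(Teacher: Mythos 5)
Your proposal is correct and takes essentially the same route as the paper: both reduce existence of the joint distribution to non-negativity of the four cell probabilities, the only difference being that you use the symmetric inversion formula $p_{ab}=\tfrac{1}{4}\bigl(1+a\langle A\rangle+b\langle B\rangle+ab\langle AB\rangle\bigr)$ where the paper parameterizes the $2\times2$ table by $(p,q,r)$ and passes through the Fr\'echet bounds $\max(p+q-1,0)\le r\le\min(p,q)$. Your closing observation that non-emptiness of the interval $\bigl[\,\left|\langle A\rangle+\langle B\rangle\right|-1,\;1-\left|\langle A\rangle-\langle B\rangle\right|\,\bigr]$ is exactly equivalent to $\left|\langle A\rangle\right|\le1$ and $\left|\langle B\rangle\right|\le1$ is accurate and ties up the sufficiency direction cleanly.
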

\begin{proof}
For jointly distributed $\left(A,B\right)$, from the table of probabilities

\[
\begin{tabular}{|c|cc|cccccccccc}
 \cline{2-3} \multicolumn{1}{c|}{}  &  \ensuremath{B=+1}  &  \ensuremath{B=-1}  &  \tabularnewline\cline{1-3} \ensuremath{A=+1}  &  \ensuremath{r}  &  \ensuremath{p-r}  &  \ensuremath{p}\tabularnewline\ensuremath{A=-1}  &  \ensuremath{q-r}  &  \ensuremath{1-p-q+r}  &  \ensuremath{1-p}\tabularnewline\cline{1-3} \multicolumn{1}{c}{}  &  \ensuremath{q}  &  \multicolumn{1}{c}{\ensuremath{1-q}}  &  \tabularnewline\end{tabular}
\]

\noindent it is clear that 
\begin{equation}
\max\left(p+q-1,0\right)\leq r\leq\min\left(p,q\right).\label{eq:Lemma 2 equivalent}
\end{equation}
Since 
\[
\begin{array}{c}
\left\langle AB\right\rangle =1-2p-2q+4r,\\
\left\langle A\right\rangle =2p-1,\\
\left\langle B\right\rangle =2q-1,
\end{array}
\]
straightforward algebra leads to (\ref{eq:Lemma 2}). Conversely,
expressing $p,q,r$ through $\left\langle A\right\rangle ,\left\langle B\right\rangle ,\left\langle AB\right\rangle $,
(\ref{eq:Lemma 2}) implies $\left(\ref{eq:Lemma 2 equivalent}\right)$,
and then all probabilities in the table above are well defined. 
\end{proof}
\begin{figure}
\begin{centering}
\includegraphics[scale=0.25]{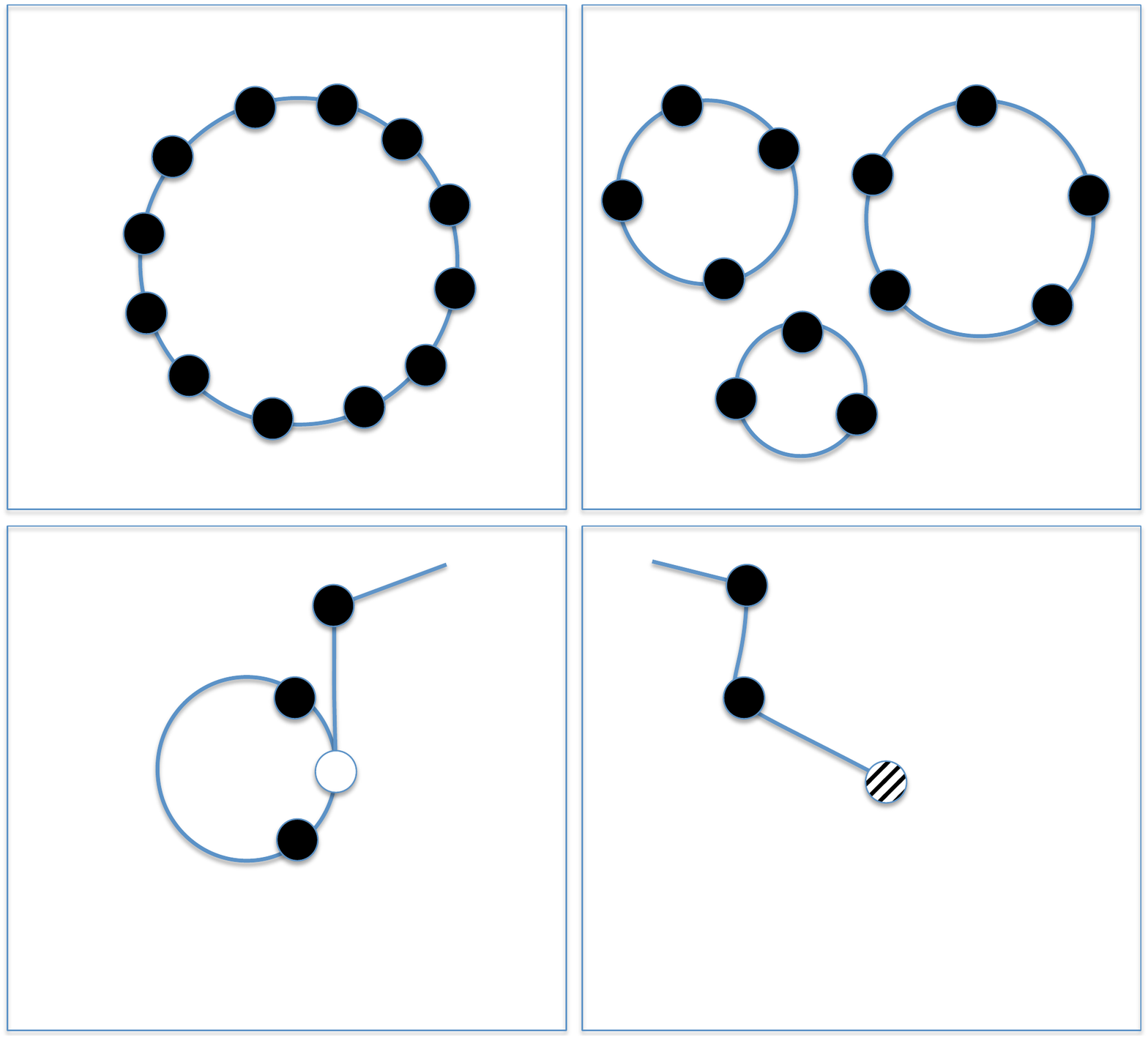} 
\par\end{centering}

\protect\protect\protect\protect\caption{{\footnotesize{}{}{}In a system satisfying S1-S2 the properties
being measured (represented by small circles) can be arranged in one
(top left) or more (top right) cycles in which any two successive
elements form a context. The bottom panels show that no other arrangements
are possible: the patterned circle participates in less than two contexts,
the open circle belongs to more than two contexts.}\label{fig:arrangements}}
\end{figure}

\begin{lem}
\label{lem:joint-of-three}Jointly distributed $\pm1$-valued random
variables $A$, $B$, and $C$ with given expectations $\left\langle A\right\rangle $,
$\left\langle B\right\rangle $, $\left\langle C\right\rangle $ $\left\langle AB\right\rangle $,
$\left\langle AC\right\rangle $, $\left\langle BC\right\rangle $
exist if and only if these expectations satisfy Lemma \ref{lem:joint-of-two}
and 
\begin{equation}
\so\left(\left\langle AB\right\rangle ,\left\langle BC\right\rangle ,\left\langle CA\right\rangle \right)\le1.
\end{equation}
\end{lem}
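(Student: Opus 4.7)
The plan proceeds in two halves, necessity and sufficiency, joined by a single combinatorial identification.

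For necessity, suppose a joint distribution of $A, B, C$ on $\{-1,1\}$ with the prescribed moments exists. Lemma \ref{lem:joint-of-two} applied to each of the three pairwise marginals gives the first set of constraints at once. For the second, I would argue pointwise: substitute $x = AB$, $y = BC$, $z = CA$, note that $x,y,z \in \{-1,1\}$ and $xyz = A^2B^2C^2 = 1$ deterministically, and observe that for any $(\iota_1,\iota_2,\iota_3)$ with $\iota_1\iota_2\iota_3 = -1$ one has $(\iota_1 x)(\iota_2 y)(\iota_3 z) = -1$, so an odd number of the three factors $\iota_i x_i$ are $-1$ and their sum is at most $1$. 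Taking expectations and maximizing over such sign patterns gives $\so(\langle AB\rangle,\langle BC\rangle,\langle CA\rangle) \le 1$.

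For sufficiency, I would explicitly construct the joint distribution using the Walsh--Fourier parameterization on $\{-1,1\}^3$. Any distribution with the prescribed first and second moments is a one-parameter family
\[
p_{abc}(t) = \tfrac{1}{8}\bigl(L_{abc} + abc\, t\bigr),\qquad (a,b,c)\in\{-1,1\}^3,
\]
where $L_{abc} = 1 + a\langle A\rangle + b\langle B\rangle + c\langle C\rangle + ab\langle AB\rangle + ac\langle AC\rangle + bc\langle BC\rangle$ and $t = \langle ABC\rangle$ is the free triple-correlation parameter. Existence of a genuine joint distribution is equivalent to finding $t$ making all eight $p_{abc}(t) \ge 0$, which splits by the sign of $abc$ into the scalar condition
\[
\max_{abc=1}(-L_{abc}) \;\le\; t \;\le\; \min_{a'b'c'=-1} L_{a'b'c'}.
\]
This interval is nonempty iff $L_{abc} + L_{a'b'c'} \ge 0$ for every one of the sixteen pairs with $abc = 1$, $a'b'c' = -1$.

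The final step, which I expect to be the bulk of the work, is to identify these sixteen sum-inequalities with the stated conditions. Twelve of them pair triples differing in a single coordinate (say $(a,b,c)$ with $(-a,b,c)$); the odd terms in the flipped coordinate cancel, collapsing the inequality to $1 + b\langle B\rangle + c\langle C\rangle + bc\langle BC\rangle \ge 0$, which is exactly the Lemma \ref{lem:joint-of-two} condition for the unflipped pair (and analogously for the other two pairs). The remaining four pairings are antipodal, $(a,b,c)\leftrightarrow(-a,-b,-c)$, and now the odd-degree terms cancel instead, leaving $1 + ab\langle AB\rangle + bc\langle BC\rangle + ca\langle CA\rangle \ge 0$. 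Since $(ab,bc,ca)$ has product $(abc)^2=1$, as $(a,b,c)$ ranges over the four triples with $abc=1$ the coefficient triple $(ab,bc,ca)$ ranges over all four sign patterns with even product; rearranging puts each such inequality in the form $\iota_1\langle AB\rangle + \iota_2\langle BC\rangle + \iota_3\langle CA\rangle \le 1$ with $\iota_1\iota_2\iota_3 = -1$, and these four conditions together are exactly $\so(\langle AB\rangle,\langle BC\rangle,\langle CA\rangle) \le 1$.

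The main obstacle is this last combinatorial bookkeeping: showing that the sixteen required sums reduce precisely to the twelve Lemma \ref{lem:joint-of-two} constraints and the four $\so \le 1$ constraints, with no redundancies forgotten and no extra constraints introduced. Once that match is verified, both directions follow and the lemma is established.
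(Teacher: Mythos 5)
Your proof is correct, and it takes a genuinely more explicit route than the paper's. The paper sets up the same linear system on the eight probabilities $p_{abc}$ but then simply invokes ``any algorithm (facet enumeration and reduction)'' from an external reference, adding only a remark that one could alternatively specialize Fine's theorem by identifying $A_{1}=B_{1}=A$. You instead carry out the elimination by hand: the Walsh--Fourier parameterization isolates the single free parameter $t=\left\langle ABC\right\rangle $, nonnegativity becomes a one-variable interval condition, and eliminating $t$ yields the sixteen sums $L_{abc}+L_{a'b'c'}\ge0$. Your sorting of these is exhaustive and correct: a triple with $abc=1$ and one with $a'b'c'=-1$ must differ in an odd number of coordinates, giving $4\times3=12$ single-coordinate flips (which collapse to the cell nonnegativity of the three pairwise marginals, i.e.\ exactly the Lemma \ref{lem:joint-of-two} conditions) plus $4\times1=4$ antipodal pairs (which, after the substitution $\left(\iota_{1},\iota_{2},\iota_{3}\right)=-\left(ab,bc,ca\right)$ with $\iota_{1}\iota_{2}\iota_{3}=-1$, are exactly the four inequalities comprising $\so\left(\left\langle AB\right\rangle ,\left\langle BC\right\rangle ,\left\langle CA\right\rangle \right)\le1$), for a total of $16$. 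The pointwise parity argument for necessity, using $AB\cdot BC\cdot CA=1$ surely, is likewise cleaner than anything the paper offers. What your approach buys is a self-contained, human-checkable derivation in place of an appeal to a computation; what the paper's version buys is brevity. The one point worth tightening in a final write-up is to state explicitly the Hamming-distance count above, so that it is clear no pair among the sixteen is omitted and none falls outside the two cases you treat.
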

\begin{proof}
$\left\langle AB\right\rangle $, $\left\langle A\right\rangle $,
and $\left\langle B\right\rangle $ satisfying Lemma \ref{lem:joint-of-two}
uniquely determine $\Pr\left[A=1,B=1\right]$; and analogously for
$\Pr\left[B=1,C=1\right]$ and $\Pr\left[C=1,A=1\right]$. A joint
distribution of $\left(A,B,C\right)$ is determined by 8 probabilities
$p_{abc}=\Pr\left[A=a,B=b,C=c\right]$, $a,b,c\in\left\{ -1,1\right\} $.
It has the given expectations if and only if the $8$ probabilities
$p_{abc}$ satisfy 7 equations 
\[
\begin{array}{cc}
\sum_{b,c}p_{1bc}=\Pr\left[A'=1\right], & \sum_{c}p_{11c}=\Pr\left[A'=1,B'=1\right],\\
\\
\sum_{a,c}p_{a1c}=\Pr\left[B'=1\right], & \sum_{a}p_{a11}=\Pr\left[B'=1,C'=1\right],\\
\\
\sum_{a,b}p_{ab1}=\Pr\left[C'=1\right], & \sum_{b}p_{1b1}=\Pr\left[C'=1,A'=1\right],
\end{array}
\]
\[
\sum_{a,b,c}p_{abc}=1.
\]
The statement of the lemma obtains by any algorithm (facet enumeration
and reduction) analogous to that described in Text~S3 of Ref. \cite{DK2013PLOS-1}.\end{proof}
\begin{rem}
One can also obtain the proof by using Fine's theorem \cite{Fine1982-1},
presenting it as (using Fine's notation for the random variables)
\[
\so\left(\left\langle A_{1}B_{1}\right\rangle ,\left\langle A_{1}B_{2}\right\rangle ,\left\langle A_{2}B_{1}\right\rangle ,\left\langle A_{2}B_{2}\right\rangle \right)\le2,
\]
and then putting $A_{1}=B_{1}=A$, $B_{2}=B$, and $A_{2}=C$.\end{rem}
\begin{lem}
\label{lem:chain-joint}Jointly distributed arbitrary random variables
$A,B,C$ with given 2-marginal distributions of $(A,B)$ and $(B,C)$
exist if and only if these 2-marginals agree for the distribution
of $B$.\end{lem}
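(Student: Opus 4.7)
The plan is to prove the two directions separately, with necessity being immediate and sufficiency requiring an explicit construction via conditional distributions.

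\textbf{Necessity.} If a joint distribution of $(A,B,C)$ exists with the prescribed $(A,B)$- and $(B,C)$-marginals, then the marginal of $B$ computed from $(A,B)$ and the marginal of $B$ computed from $(B,C)$ both coincide with the marginal of $B$ under the joint distribution, hence they agree. I would write this in one line.

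\textbf{Sufficiency.} Given 2-marginal laws $\mu_{AB}$ and $\mu_{BC}$ whose $B$-marginals both equal some $\mu_{B}$, I would construct the joint law $\mu_{ABC}$ by ``gluing along $B$.'' The key step is to form regular conditional probabilities $\mu_{A\mid B}(\cdot\mid b)$ from $\mu_{AB}$ and $\mu_{C\mid B}(\cdot\mid b)$ from $\mu_{BC}$, and then define
\begin{equation}
\mu_{ABC}(da,db,dc) = \mu_{A\mid B}(da\mid b)\,\mu_{C\mid B}(dc\mid b)\,\mu_{B}(db).
\end{equation}
Integrating out $C$ recovers $\mu_{AB}$ and integrating out $A$ recovers $\mu_{BC}$, so the constructed law has the required 2-marginals. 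Intuitively, this corresponds to the Markov chain $A\to B\to C$: once $B$ is fixed, $A$ and $C$ are sampled independently from their respective conditional distributions.

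\textbf{Expected obstacle.} The only subtle point is the existence of the regular conditional probabilities $\mu_{A\mid B}$ and $\mu_{C\mid B}$. Since the paper works exclusively with random variables taking finitely many values (as stated explicitly in the \emph{Basic Concepts and Definitions} section), this reduces to ordinary conditional probability mass functions, and the construction simplifies to
\begin{equation}
\Pr[A=a,B=b,C=c] = \frac{\Pr[A=a,B=b]\,\Pr[B=b,C=c]}{\Pr[B=b]},
\end{equation}
with the convention that the term is $0$ whenever $\Pr[B=b]=0$ (in which case both numerator factors also vanish). A direct check that summing over $c$ gives $\Pr[A=a,B=b]$ and summing over $a$ gives $\Pr[B=b,C=c]$ completes the proof. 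For the general (not necessarily finite-valued) formulation of the statement, the same argument works on standard Borel spaces using disintegration, but this is not needed for the applications in the main text.
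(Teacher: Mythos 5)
Your proposal is correct and follows essentially the same route as the paper, which also proves sufficiency by the Markov-chain construction $\Pr[A=a,B=b,C=c]=\Pr[C=c\mid B=b]\Pr[B=b\mid A=a]\Pr[A=a]$ — algebraically identical to your gluing formula. Your explicit handling of the $\Pr[B=b]=0$ case is a small point of added care over the paper's one-line argument, but the idea is the same.
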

\begin{proof}
The necessity is obvious. The sufficiency obtains by the Markov rule
\[
\begin{array}{l}
\Pr[A=a,B=b,C=c]\\
=\Pr[C=c\mid B=b]\Pr[B=b\mid A=a]\Pr[A=a],
\end{array}
\]
for any possible values $a,b,c$ of, respectively, $A,B,C$. \end{proof}
\begin{cor}[to Lemma \ref{lem:chain-joint}]
\label{cor:chain-joint}Jointly distributed $\pm1$-valued random
variables $A_{1},\dots,A_{n}$ with given expectations $\left\langle A{}_{1}\right\rangle ,\ldots,\left\langle A_{n}\right\rangle $,
$\left\langle A{}_{1}A{}_{2}\right\rangle ,\dots,\left\langle A{}_{n-1}A{}_{n}\right\rangle $
exist if and only if these expectations satisfy Lemma \ref{lem:joint-of-two}. 
\end{cor}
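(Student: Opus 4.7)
The plan is to derive the corollary from Lemma \ref{lem:chain-joint} by induction on $n$, using Lemma \ref{lem:joint-of-two} to manufacture each consecutive pairwise joint distribution from the given expectations.

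Necessity is immediate: if a joint distribution of $A_{1},\dots,A_{n}$ with the prescribed expectations exists, then for every $i$ the marginal pair $(A_{i},A_{i+1})$ is itself a jointly distributed $\pm1$-valued pair with expectations $\langle A_{i}\rangle,\langle A_{i+1}\rangle,\langle A_{i}A_{i+1}\rangle$, so Lemma \ref{lem:joint-of-two} forces the stated inequalities on each consecutive triple.

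For sufficiency, assume the inequalities of Lemma \ref{lem:joint-of-two} hold for each consecutive pair. By Lemma \ref{lem:joint-of-two}, for every $i=1,\dots,n-1$ there exists a jointly distributed $\pm1$-valued pair $(A_{i}^{*},A_{i+1}^{*})$ with expectations $\langle A_{i}\rangle,\langle A_{i+1}\rangle,\langle A_{i}A_{i+1}\rangle$. Note that the $1$-marginal distribution of $A_{i}^{*}$ in this pair, being $\pm1$-valued with expectation $\langle A_{i}\rangle$, is the same whether read off from the pair $(A_{i-1}^{*},A_{i}^{*})$ or from the pair $(A_{i}^{*},A_{i+1}^{*})$. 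This is precisely the compatibility hypothesis of Lemma \ref{lem:chain-joint}.

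The induction is then straightforward. The base case $n=2$ is Lemma \ref{lem:joint-of-two} itself. Suppose the result holds for $n-1$: we obtain a joint distribution for $(A_{1},\dots,A_{n-1})$ whose consecutive $2$-marginals are the prescribed ones. In particular, its marginal for $A_{n-1}$ agrees with the $A_{n-1}$-marginal of the pair $(A_{n-1}^{*},A_{n}^{*})$. Applying Lemma \ref{lem:chain-joint} with $A=(A_{1},\dots,A_{n-2})$, $B=A_{n-1}$, $C=A_{n}$, we glue these along $A_{n-1}$ to obtain a joint distribution of $(A_{1},\dots,A_{n})$. By construction, all prescribed $1$-marginals and consecutive $2$-marginals — hence all prescribed expectations $\langle A_{i}\rangle$ and $\langle A_{i}A_{i+1}\rangle$ — are realized.

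No step looks hard: the only mild subtlety is noticing that the hypothesis of Lemma \ref{lem:chain-joint} (agreement of the shared $1$-marginal) is automatic here, because a $\pm1$-valued distribution is determined by its expectation, and the shared expectation $\langle A_{i}\rangle$ is prescribed once and for all. So the real content is packaged into Lemmas \ref{lem:joint-of-two} and \ref{lem:chain-joint}, and the corollary is essentially an unpacking exercise.
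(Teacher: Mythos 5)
Your proposal is correct and is exactly the argument the paper intends (the paper leaves the corollary unproved as an immediate consequence of Lemma \ref{lem:chain-joint}): glue consecutive pairs supplied by Lemma \ref{lem:joint-of-two} along their shared one-dimensional marginals, which agree automatically because a $\pm1$-valued distribution is determined by its expectation.
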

\begin{figure}
\begin{centering}
\includegraphics[scale=0.35]{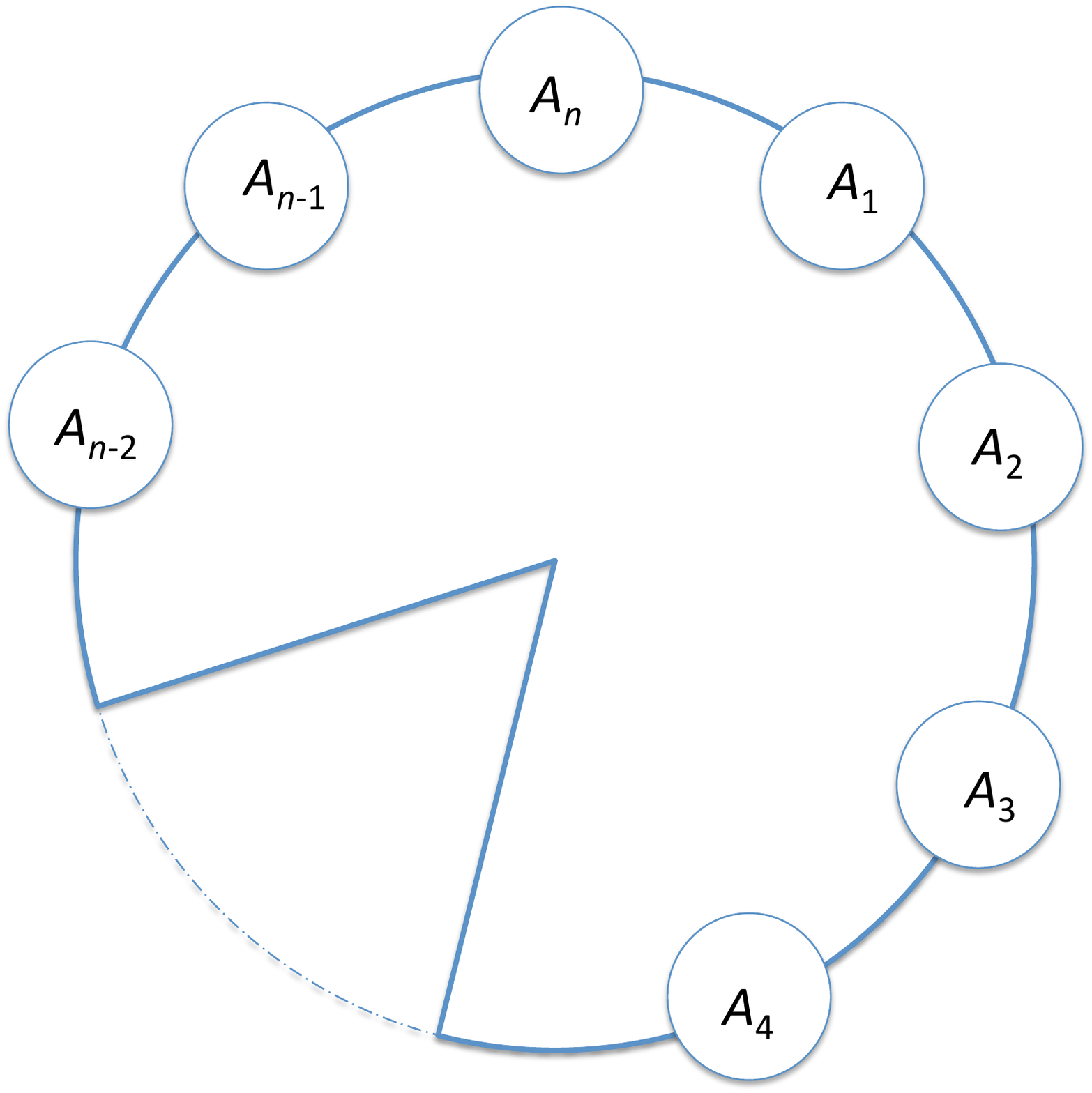} 
\par\end{centering}

\protect\protect\caption{{\footnotesize{}{}The arrangement of random variables with no distinction
made between context representations and connections: $n$ random
variables any successive two of which have a joint distribution. This
structure is characterized in Theorem \ref{thm:A1-An}. \label{fig:A1-An-1}}}
\end{figure}

\begin{thm}
\label{thm:A1-An}Jointly distributed $\pm1$-valued random variables
$A_{1},\dots,A_{n}$ ($n\ge3$) with given expectations 
\begin{equation}
\left\langle A{}_{1}A{}_{2}\right\rangle ,\dots,\left\langle A{}_{n-1}A{}_{n}\right\rangle ,\left\langle A{}_{n}A{}_{1}\right\rangle \label{eq:expectations_n}
\end{equation}
exist if and only if these expectations satisfy Lemma \ref{lem:joint-of-two}
and 
\begin{equation}
\so\left(\left\langle A{}_{1}A{}_{2}\right\rangle ,\dots,\left\langle A{}_{n-1}A{}_{n}\right\rangle ,\left\langle A{}_{n}A{}_{1}\right\rangle \right)\le n-2.\label{eq:main lemma}
\end{equation}
\end{thm}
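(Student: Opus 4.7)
The plan is induction on $n$. Necessity is direct: if $A_1,\ldots,A_n$ are jointly realized (indices read cyclically), then $\prod_{k=1}^{n}(A_k A_{k+1})=\prod_k A_k^2=1$ identically, so for any sign pattern $\iota$ with $\prod_k \iota_k=-1$ the product $\prod_k(\iota_k A_k A_{k+1})$ equals $-1$, forcing at least one summand $\iota_k A_k A_{k+1}$ to equal $-1$; hence $\sum_k \iota_k A_k A_{k+1}\le n-2$ pointwise, and taking expectations yields \eqref{eq:main lemma}. The pairwise constraints of Lemma~\ref{lem:joint-of-two} are trivially necessary.

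For sufficiency, I would proceed by induction on $n\ge 3$, with the base case $n=3$ handled directly by Lemma~\ref{lem:joint-of-three} (the individual means being unspecified, they can be chosen to be $0$, which makes the pair constraints vacuous). For the inductive step, I would introduce an auxiliary parameter $t$ to play the role of $\langle A_{n-1}A_1\rangle$: this partitions the $n$-cycle into an $(n-1)$-cycle on $A_1,\ldots,A_{n-1}$ and a triangle on $(A_{n-1},A_n,A_1)$, the two sharing the edge $(A_{n-1},A_1)$. The aim is to choose $t$ so that both subsystems admit joint realizations, then glue them via the Markov rule (Lemma~\ref{lem:chain-joint}) by conditioning $A_n$ on $(A_{n-1},A_1)$ using the triangle distribution and combining with the $(n-1)$-cycle distribution.

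Writing $a=(a_1,\ldots,a_{n-2})$ for the ``non-wrap'' correlations and $b_1,b_2$ for the two correlations involving $A_n$, the induction hypothesis requires $\so(a,t)\le n-3$ and Lemma~\ref{lem:joint-of-three} requires $\so(b_1,b_2,t)\le 1$. Splitting off $t$ in each condition via Lemma~\ref{lem:s0_s1_split}, both reduce to interval constraints on $t$; their intersection is $\max\bigl(\se(a)-(n-3),\,|b_1+b_2|-1\bigr)\le t\le \min\bigl((n-3)-\so(a),\,1-|b_1-b_2|\bigr)$. Non-emptiness reduces to four pairwise inequalities. Two of them, $\so(a)+|b_1+b_2|\le n-2$ and $\se(a)+|b_1-b_2|\le n-2$, are precisely the two summands obtained when Lemma~\ref{lem:s0_s1_split} is applied to the hypothesis $\so(a,b_1,b_2)\le n-2$. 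A third, $|b_1+b_2|+|b_1-b_2|\le 2$, is elementary, since it equals $2\max(|b_1|,|b_2|)\le 2$.

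The crux, and the step I anticipate as the main obstacle, is the fourth inequality $\se(a)+\so(a)\le 2(n-3)$. I would establish it by observing that an optimizer $\iota$ of $\se(a)$ and an optimizer $\iota'$ of $\so(a)$ carry even and odd numbers of $-1$'s respectively, so they disagree in at least one coordinate; at any such disagreeing coordinate the contribution $(\iota_k+\iota'_k)\,a_k$ vanishes, leaving at most $n-3$ surviving terms each bounded by $2$. Having secured a valid $t$, I would symmetrize each of the constructed distributions by flipping all signs with probability $1/2$ — an operation that preserves every pair correlation while zeroing all individual means — so that the two $(A_{n-1},A_1)$-marginals coincide; the Markov-rule glueing then delivers the desired joint distribution of $A_1,\ldots,A_n$ with the prescribed consecutive correlations.
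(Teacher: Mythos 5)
Your proposal follows the paper's route in all essentials: induction with base case Lemma \ref{lem:joint-of-three}, splitting the cycle into a shorter cycle and a triangle sharing one edge, Fourier--Motzkin elimination of the shared correlation $t$ via Lemma \ref{lem:s0_s1_split}, and gluing by the Markov rule of Lemma \ref{lem:chain-joint}. Your bookkeeping is right: the four cross-inequalities are exactly as you list them, two of them recombine into the hypothesis $\so(a,b_1,b_2)\le n-2$ by Lemma \ref{lem:s0_s1_split}, and your parity argument for $\se(a)+\so(a)\le 2(n-3)$ is a correct (and arguably cleaner) substitute for the paper's identity $\se(a)+\so(a)=2\sum_k|a_k|-2\min_k|a_k|$. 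Your direct pointwise necessity argument (the product $\prod_k\iota_k A_kA_{k\oplus1}=-1$ forces at least one summand to be $-1$) is also sound and shorter than the paper's, which extracts necessity from the induction itself.

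The one substantive divergence is the symmetrization step, and it costs more than it appears to. Randomizing the global sign proves existence of a joint distribution with the prescribed adjacent correlations and \emph{all means zero}. That suffices for the statement read literally (only the products are ``given''), but it is strictly weaker than what the paper's induction establishes and than what the main criterion requires: in the application the individual expectations $\left\langle R_{i}^{i}\right\rangle$, $\left\langle R_{i}^{i\ominus1}\right\rangle$ are fixed by the observed marginals, so the coupling must reproduce them --- this is also why the statement invokes Lemma \ref{lem:joint-of-two}, whose content is precisely the interplay between $\left\langle A\right\rangle$, $\left\langle B\right\rangle$, $\left\langle AB\right\rangle$. The paper therefore carries the means through the induction as given data; the elimination then produces, besides the cycle inequality \eqref{eq:n+1-cycle}, the two mean-dependent inequalities \eqref{eq:n-chain} and \eqref{eq:3-chain}, which must be (and are) shown to be automatic consequences of Lemma \ref{lem:joint-of-two}, by closing a chain with a constant variable $\equiv1$ and invoking the already-proved necessity part. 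If you drop the symmetrization and instead match the $(A_{n-1},A_1)$-marginals using the actual prescribed means (they still coincide, since a $\pm1$ pair distribution is determined by its two means and its correlation), your interval for $t$ acquires exactly those extra endpoints, and you need the constant-variable argument or an equivalent to dispose of them. As written, your proof does not deliver the form of the theorem on which the proof of the main criterion relies.
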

\begin{proof}
For $n=3$ the statement follows from Lemma \ref{lem:joint-of-three}.

Assume that the statement holds up to and including some $n\ge3$.
We will prove that

(i) jointly distributed $\pm1$-valued random variables $A_{1},\dots,A_{n+1}$
with given expectations 
\[
\left\langle A{}_{1}\right\rangle ,\ldots,\left\langle A{}_{n+1}\right\rangle ,\left\langle A{}_{1}A{}_{2}\right\rangle ,\dots,\left\langle A{}_{n}A{}_{n+1}\right\rangle ,\left\langle A{}_{n+1}A{}_{1}\right\rangle 
\]
exist if and only if

(ii) these expectations satisfy Lemma \ref{lem:joint-of-two} and

(iii) they satisfy

\[
\so\left(\left\langle A{}_{1}A{}_{2}\right\rangle ,\dots,\left\langle A{}_{n}A{}_{n+1}\right\rangle ,\left\langle A{}_{n+1}A{}_{1}\right\rangle \right)\le n-1.
\]
Since Statement (ii) is an obvious consequence of Statement (i), we
only need to prove that if Statement (ii) is satisfied, then Statements
(i) and (iii) are equivalent. So we assume Statement (ii).

By Lemma \ref{lem:chain-joint}, jointly distributed $\left(A_{2},\ldots,A_{n-1}\right),\left(A_{1},A_{n}\right),A_{n+1}$
exist if and only if there are jointly distributed $\left(A_{2},\ldots,A_{n-1}\right),\left(A_{1},A_{n}\right)$
and $\left(A_{1},A_{n}\right),A_{n+1}$, with one and the same jointly
distributed $\left(A_{1},A_{n}\right)$. Hence Statement (i) holds
if and only if, for some $\left\langle A_{n}A_{1}\right\rangle $
satisfying Lemma \ref{lem:joint-of-two}, $\left(A{}_{1},\dots,A{}_{n}\right)$
exists with expectations $\left\langle A{}_{1}\right\rangle ,\ldots,\left\langle A{}_{n}\right\rangle ,\left\langle A{}_{1}A{}_{2}\right\rangle ,\dots,\left\langle A{}_{n-1}A{}_{n}\right\rangle ,\left\langle A{}_{n}A{}_{1}\right\rangle $,
and $\left(A{}_{n},A{}_{n+1},A{}_{1}\right)$ exists with expectations
$\left\langle A{}_{1}\right\rangle ,\left\langle A{}_{n}\right\rangle ,\left\langle A{}_{n+1}\right\rangle ,\left\langle A{}_{1}A{}_{n}\right\rangle ,\left\langle A{}_{n}A{}_{n+1}\right\rangle ,\left\langle A{}_{n+1}A{}_{1}\right\rangle $.
Therefore, by the induction hypothesis, Statement (i) holds if and
only if 
\begin{align*}
\so\left(\left\langle A_{1}A_{2}\right\rangle ,\dots,\left\langle A_{n-1}A_{n}\right\rangle ,\left\langle A_{n}A_{1}\right\rangle \right) & \le n-2,\\
\so\left(\left\langle A_{n}A_{n+1}\right\rangle ,\left\langle A_{n+1}A_{1}\right\rangle ,\left\langle A_{n}A_{1}\right\rangle \right) & \le1.
\end{align*}
Applying now Lemma \ref{lem:s0_s1_split} to these inequalities and
adding the condition of Lemma \ref{lem:joint-of-two} for the consistency
of $\left\langle A_{n}A_{1}\right\rangle $ with $\left\langle A_{n}\right\rangle $
and $\left\langle A_{1}\right\rangle $, we obtain the following system
\[
\begin{array}{l}
\so\left(\left\langle A_{1}A_{2}\right\rangle ,\dots,\left\langle A_{n-1}A_{n}\right\rangle \right)+\left\langle A_{n}A_{1}\right\rangle \le n-2,\\
\se\left(\left\langle A_{1}A_{2}\right\rangle ,\dots,\left\langle A_{n-1}A_{n}\right\rangle \right)-\left\langle A_{n}A_{1}\right\rangle \le n-2,\\
\so\left(\left\langle A_{n}A_{n+1}\right\rangle ,\left\langle A_{n+1}A_{1}\right\rangle \right)+\left\langle A_{n}A_{1}\right\rangle \le1,\\
\se\left(\left\langle A_{n}A_{n+1}\right\rangle ,\left\langle A_{n+1}A_{1}\right\rangle \right)-\left\langle A_{n}A_{1}\right\rangle \le1,\\
\left|\left\langle A_{n}\right\rangle +\left\langle A_{1}\right\rangle \right|-1\le\left\langle A_{n}A_{1}\right\rangle \le1-\left|\left\langle A_{n}\right\rangle -\left\langle A_{1}\right\rangle \right|.
\end{array}
\]
Statement (i) holds if and only if this system is satisfied, for some
real value of $\left\langle A_{n}A_{1}\right\rangle $. And it is
satisfied if and only if 
\[
\begin{array}{l}
\left\{ \begin{array}{c}
\se\left(\left\langle A_{1}A_{2}\right\rangle ,\dots,\left\langle A_{n-1}A_{n}\right\rangle \right)-n+2\\
\se\left(\left\langle A_{n}A_{n+1}\right\rangle ,\left\langle A_{n+1}A_{1}\right\rangle \right)-1\\
\left|\left\langle A_{n}\right\rangle +\left\langle A_{1}\right\rangle \right|-1
\end{array}\right.\\
\\
\le\left\{ \begin{array}{c}
n-2-\so\left(\left\langle A_{1}A_{2}\right\rangle ,\dots,\left\langle A_{n-1}A_{n}\right\rangle \right),\\
1-\so\left(\left\langle A_{n}A_{n+1}\right\rangle ,\left\langle A_{n+1}A_{1}\right\rangle \right),\\
1-\left|\left\langle A_{n}\right\rangle -\left\langle A_{1}\right\rangle \right|,
\end{array}\right.
\end{array}
\]
with the inequality holding for any left-hand expression combined
with any right-hand expression. The inequalities with matching rows
are satisfied always: the first two because 
\[
\se(a_{1}\dots,a_{n})+\so(a_{1},\dots,a_{n})=2\sum_{k=1}^{n}\left|a_{k}\right|-2\min_{k}\left|a_{k}\right|\le2n-2
\]
for $a_{1},\dots,a_{n}\in[-1,1]$; the third one due to the fact that
\[
\left|a+b\right|+\left|a-b\right|=\max\left(2\left|a\right|,2\left|b\right|\right)\leq2
\]
for $a,b\in\left[-1,1\right]$. This leaves the following six inequalities
\[
\begin{array}{l}
\se\left(\left\langle A_{1}A_{2}\right\rangle ,\dots,\left\langle A_{n-1}A_{n}\right\rangle \right)-n+2\\
\le1-\so\left(\left\langle A_{n}A_{n+1}\right\rangle ,\left\langle A_{n+1}A_{1}\right\rangle \right),
\end{array}
\]
\[
\se\left(\left\langle A_{1}A_{2}\right\rangle ,\dots,\left\langle A_{n-1}A_{n}\right\rangle \right)-n+2\le1-\left|\left\langle A_{n}\right\rangle -\left\langle A_{1}\right\rangle \right|,
\]
\[
\begin{array}{l}
\se\left(\left\langle A_{n}A_{n+1}\right\rangle ,\left\langle A_{n+1}A_{1}\right\rangle \right)-1\\
\le n-2-\so\left(\left\langle A_{1}A_{2}\right\rangle ,\dots,\left\langle A_{n-1}A_{n}\right\rangle \right),
\end{array}
\]
\[
\left|\left\langle A_{n}\right\rangle +\left\langle A_{1}\right\rangle \right|-1\le n-2-\so\left(\left\langle A_{1}A_{2}\right\rangle ,\dots,\left\langle A_{n-1}A_{n}\right\rangle \right),
\]
\[
\left|\left\langle A_{n}\right\rangle +\left\langle A_{1}\right\rangle \right|-1\le n-2-\so\left(\left\langle A_{1}A_{2}\right\rangle ,\dots,\left\langle A_{n-1}A_{n}\right\rangle \right),
\]
\[
\left|\left\langle A_{n}\right\rangle +\left\langle A_{1}\right\rangle \right|-1\le1-\so\left(\left\langle A_{n}A_{n+1}\right\rangle ,\left\langle A_{n+1}A_{1}\right\rangle \right).
\]
They simplify to 
\[
\begin{array}{l}
s_{0}\left(\left\langle A_{1}A_{2}\right\rangle ,\dots,\left\langle A_{n-1}A_{n}\right\rangle \right)\\
+s_{1}\left(\left\langle A_{n}A_{n+1}\right\rangle ,\left\langle A_{n+1}A_{1}\right\rangle \right)\le n-1,
\end{array}
\]
\[
s_{0}\left(\left\langle A_{1}A_{2}\right\rangle ,\dots,\left\langle A_{n-1}A_{n}\right\rangle \right)+\left|\left\langle A_{n}\right\rangle -\left\langle A_{1}\right\rangle \right|\le n-1,
\]
\[
\begin{array}{l}
s_{0}\left(\left\langle A_{n}A_{n+1}\right\rangle ,\left\langle A_{n+1}A_{1}\right\rangle \right)\\
+s_{1}\left(\left\langle A_{1}A_{2}\right\rangle ,\dots,\left\langle A_{n-1}A_{n}\right\rangle \right)\le n-1,
\end{array}
\]
\[
s_{0}\left(\left\langle A_{n}A_{n+1}\right\rangle ,\left\langle A_{n+1}A_{1}\right\rangle \right)+\left|\left\langle A_{n}\right\rangle -\left\langle A_{1}\right\rangle \right|\le2,
\]
\[
s_{1}\left(\left\langle A_{1}A_{2}\right\rangle ,\dots,\left\langle A_{n-1}A_{n}\right\rangle \right)+\left|\left\langle A_{n}\right\rangle +\left\langle A_{1}\right\rangle \right|\le n-1,
\]
\[
s_{1}\left(\left\langle A_{n}A_{n+1}\right\rangle ,\left\langle A_{n+1}A_{1}\right\rangle \right)+\left|\left\langle A_{n}\right\rangle +\left\langle A_{1}\right\rangle \right|\le2,
\]
and we combine pairs of inequalities using Lemma \ref{lem:s0_s1_split}
to obtain 
\begin{align}
s_{1}\left(\left\langle A_{1}A_{2}\right\rangle ,\dots,\left\langle A_{n-1}A_{n}\right\rangle ,\left\langle A_{n}A_{n+1}\right\rangle ,\left\langle A_{n+1}A_{1}\right\rangle \right) & \le n-1,\label{eq:n+1-cycle}\\
s_{1}\left(\left\langle A_{1}A_{2}\right\rangle ,\dots,\left\langle A_{n-1}A_{n}\right\rangle ,\left\langle A_{n}\right\rangle ,\left\langle A_{1}\right\rangle \right) & \le n-1,\label{eq:n-chain}\\
s_{1}\left(\left\langle A_{n}A_{n+1}\right\rangle ,\left\langle A_{n+1}A_{1}\right\rangle ,\left\langle A_{n}\right\rangle ,\left\langle A_{1}\right\rangle \right) & \le2.\label{eq:3-chain}
\end{align}
These three inequalities are satisfied if and only if Statement (i)
holds. In particular, Statement (i) implies (\ref{eq:n+1-cycle}),
and this completes the proof by induction of the necessity part of
the theorem: for any $n>1$, if $A_{1},\dots,A_{n}$ are jointly distributed
with expectations (\ref{eq:expectations_n}) then these expectations
satisfy (\ref{eq:main lemma}) (and Lemma \ref{lem:joint-of-two}).

Now, Corollary~\ref{cor:chain-joint} implies that a joint distribution
of $A_{1},\dots,A_{n}$ with expectations $\left\langle A{}_{1}\right\rangle ,\ldots,\left\langle A{}_{n}\right\rangle $,
$\left\langle A_{1}A_{2}\right\rangle ,\dots,\left\langle A_{n-1}A_{n}\right\rangle $
(satisfying Lemma \ref{lem:joint-of-two}) always exists. If we close
this chain into a cycle by introducing a constant variable $A_{n+1}\equiv1$,
we get $n+1$ jointly distributed variables with expectations $\left\langle A{}_{1}\right\rangle ,\ldots,\left\langle A{}_{n}\right\rangle ,\left\langle A{}_{n+1}\right\rangle =1$,
$\left\langle A_{1}A_{2}\right\rangle ,\dots,\left\langle A_{n-1}A_{n}\right\rangle ,\left\langle A_{n}A_{n+1}\right\rangle =\left\langle A_{n}\right\rangle ,\left\langle A_{n+1}A_{1}\right\rangle =\left\langle A_{1}\right\rangle $.
Applying to it the just established necessary part of the theorem,
we conclude that \eqref{eq:n-chain} always holds. Similarly, considering
the chain $A_{n},A_{n+1},A_{1}$ (whose joint distribution always
exists) and adding the constant variable $A'\equiv1$ to close the
chain into a cycle, the necessary condition implies \eqref{eq:3-chain}
with $A'\equiv1$. Thus, \eqref{eq:3-chain} also holds always, leaving
just \eqref{eq:n+1-cycle} as the equivalent condition for Statement
(i). 
\end{proof}

\begin{proof}[\textbf{{Proof of the main criterion (Theorem 4).}}]
From Theorem \ref{thm:A1-An}, contexts $\left\{ \left(R_{i}^{i},R_{i\oplus1}^{i}\right):i=1,\ldots,n\right\} $
and connections $\left\{ \left(R_{i}^{i},R_{i}^{i\ominus1}\right):i=1,\ldots,n\right\} $
with specified expectations $\left\{ \left\langle R_{i}^{i}R_{i\oplus1}^{i}\right\rangle :i=1,\ldots,n\right\} $
and $\left\{ \left\langle R_{i}^{i\ominus1}R_{i}^{i}\right\rangle :i=1,\ldots,n\right\} $
(subject to Lemma \ref{lem:joint-of-two}) can be imposed a joint
distribution upon if and only if 
\begin{equation}
\so\left(\left\langle R_{i}^{i}R_{i\oplus1}^{i}\right\rangle ,\left\langle R_{i}^{i\ominus1}R_{i}^{i}\right\rangle :i=1,\ldots,n\right)\le2n-2.
\end{equation}
As the variables of the connection $\left(R_{i}^{i},R_{i}^{i\ominus1}\right)$
are dichotomic, the probability of them being equal can be written
as $\Pr\left[R_{i}^{i}=R_{i}^{i\ominus1}\right]=\left(1+\left\langle R_{i}^{i\ominus1}R_{i}^{i}\right\rangle \right)/2$
and so this probability is maximized if and only if the expectation
$\left\langle R_{i}^{i\ominus1}R_{i}^{i}\right\rangle $ is maximized.
By Lemma \ref{lem:joint-of-two}, $1-\left|\left\langle R_{i}^{i}\right\rangle -\left\langle R_{i}^{i\ominus1}\right\rangle \right|$
is the maximum possible value of $\left\langle R_{i}^{i\ominus1}R_{i}^{i}\right\rangle $
given the distributions of $R_{i}^{i}$ and $R_{i}^{i\ominus1}$ (determined
by $\left\langle R_{i}^{i}\right\rangle $ and $\left\langle R_{i}^{i\ominus1}\right\rangle $).
The statement of the theorem now follows from Definition 2. \end{proof}
\begin{cor}[to Theorem~4]
\label{cor:-necessary} A cyclic system of rank $n>1$ with dichotomic
random variables has a maximally noncontextual description only if
\begin{equation}
\so\big(\left\langle R_{i}^{i}R_{i\oplus1}^{i}\right\rangle :i=1,\ldots,n\big)-\sum_{i=1}^{n}\left|\left\langle R_{i}^{i}\right\rangle -\left\langle R_{i}^{i\ominus1}\right\rangle \right|\le n-2.\label{eq:simpleform-2}
\end{equation}
\end{cor}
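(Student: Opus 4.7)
The corollary is a one-sided weakening of the master criterion in Theorem~4, so the plan is to start from the ``$\so$ with $2n$ arguments $\le 2n-2$'' inequality and peel the connection terms apart from the context terms. Writing $a_i=\langle R_i^i R_{i\oplus 1}^i\rangle$ and $b_i=1-|\langle R_i^i\rangle-\langle R_i^{i\ominus1}\rangle|$, Theorem~4 says that if the system has a maximally noncontextual description then
\[
\so(a_1,\ldots,a_n,b_1,\ldots,b_n)\le 2n-2.
\]
The goal is to show that the left-hand side dominates $\so(a_1,\ldots,a_n)+\sum_i b_i$, which, after substituting the value of $b_i$ and subtracting $n$, is exactly the claimed inequality \eqref{eq:simpleform-2}.

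The first step is to apply Lemma~\ref{lem:s0_s1_split} with the first block of arguments $a_1,\dots,a_n$ and the second block $b_1,\dots,b_n$; this yields
\[
\so(a_1,\ldots,a_n,b_1,\ldots,b_n)\ge \so(a_1,\ldots,a_n)+\se(b_1,\ldots,b_n),
\]
using only one of the two branches of the max. The second step is to lower-bound $\se(b_1,\ldots,b_n)$ by $\sum_i b_i$. This is immediate from the definition of $\se$: the choice $\iota_1=\cdots=\iota_n=+1$ has zero (hence even) number of $-1$'s and realises the sum, so $\se(b_1,\ldots,b_n)\ge b_1+\cdots+b_n$ unconditionally, regardless of the signs of the $b_i$.

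Chaining the two inequalities with Theorem~4 gives
\[
\so(a_1,\ldots,a_n)+\sum_{i=1}^n b_i \le 2n-2,
\]
and substituting $b_i=1-|\langle R_i^i\rangle-\langle R_i^{i\ominus1}\rangle|$ and moving the $n$ to the right-hand side yields \eqref{eq:simpleform-2}.

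There is no real obstacle: the only points to watch are (i) that the appropriate branch of the $\so/\se$ split in Lemma~\ref{lem:s0_s1_split} is the one that keeps $\so$ on the context block, and (ii) that the bound $\se(b)\ge\sum b_i$ is valid even when some $b_i$ are negative (which can happen in principle, although not for the experimentally relevant regime). Both are one-line checks, so the corollary really is just an unpacking of Theorem~4.
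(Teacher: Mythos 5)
Your proposal is correct and follows exactly the same route as the paper's own proof: apply Theorem~4 to get $\so(a_1,\ldots,a_n,b_1,\ldots,b_n)\le 2n-2$, use the branch $\so(a,b)\ge\so(a)+\se(b)$ of Lemma~\ref{lem:s0_s1_split}, and bound $\se(b_1,\ldots,b_n)\ge\sum_i b_i$ via the all-$+1$ coefficient choice. Nothing to add.
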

\begin{proof}
Using Lemma~\ref{lem:s0_s1_split}, 
\begin{align*}
2n-2 & \geq\so\left(\left\langle R_{i}^{i}R_{i\oplus1}^{i}\right\rangle ,1-\left|\left\langle R_{i}^{i}\right\rangle -\left\langle R_{i}^{i\ominus1}\right\rangle \right|:i=1,\ldots,n\right)\\
 & \geq\so\big(\left\langle R_{i}^{i}R_{i\oplus1}^{i}\right\rangle :i=1,\ldots,n\big)\\
 & \quad+\se\big(1-\left|\left\langle R_{i}^{i}\right\rangle -\left\langle R_{i}^{i\ominus1}\right\rangle \right|:i=1,\ldots,n\big),
\end{align*}
and 
\begin{align*}
 & \se\big(1-\left|\left\langle R_{i}^{i}\right\rangle -\left\langle R_{i}^{i\ominus1}\right\rangle \right|:i=1,\ldots,n\big)\\
 & \geq\sum_{i=1}^{n}\left(1-\left|\left\langle R_{i}^{i}\right\rangle -\left\langle R_{i}^{i\ominus1}\right\rangle \right|\right)\\
 & =n-\sum_{i=1}^{n}\left|\left\langle R_{i}^{i}\right\rangle -\left\langle R_{i}^{i\ominus1}\right\rangle \right|.
\end{align*}
\end{proof}
\begin{cor}[to Theorem 4.]
A cyclic consistently-connected system of rank $n>1$ with dichotomic
random variables has a maximally noncontextual description if and
only if 
\begin{equation}
\so\left(\left\langle R_{i}^{i}R_{i\oplus1}^{i}\right\rangle :i=1,\ldots,n\right)\le n-2.\label{eq:CC_appendix}
\end{equation}
\end{cor}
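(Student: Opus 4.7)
The plan is to specialize Theorem~\ref{thm:Main Criterion} to the consistently-connected regime. By hypothesis $\left\langle R_{i}^{i}\right\rangle =\left\langle R_{i}^{i\ominus1}\right\rangle$ for every $i$, so each of the ``connection'' arguments $1-\left|\left\langle R_{i}^{i}\right\rangle -\left\langle R_{i}^{i\ominus1}\right\rangle \right|$ in \eqref{eq:master} collapses to $1$. Writing $a_{i}=\left\langle R_{i}^{i}R_{i\oplus1}^{i}\right\rangle$ for brevity, the criterion \eqref{eq:master} therefore reads
\begin{equation*}
\so\left(a_{1},\ldots,a_{n},\underbrace{1,\ldots,1}_{n\text{ ones}}\right)\le 2n-2.
\end{equation*}

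Next I would invoke Lemma~\ref{lem:s0_s1_split} to split the $2n$ arguments into the two blocks of size $n$, obtaining
\begin{equation*}
\so\left(a_{1},\ldots,a_{n},1,\ldots,1\right)=\max\bigl\{\se(a_{1},\ldots,a_{n})+\so(1,\ldots,1),\ \so(a_{1},\ldots,a_{n})+\se(1,\ldots,1)\bigr\}.
\end{equation*}
A direct evaluation from the definitions gives $\se(1,\ldots,1)=n$ (take all $\iota_{k}=+1$, which has product $+1$) and $\so(1,\ldots,1)=n-2$ (take exactly one $\iota_{k}=-1$, which has product $-1$ and maximizes the sum under the odd-parity constraint). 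Substituting back reduces the criterion to
\begin{equation*}
\max\bigl\{\se(a_{1},\ldots,a_{n})+(n-2),\ \so(a_{1},\ldots,a_{n})+n\bigr\}\le 2n-2.
\end{equation*}

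It remains to observe that the ``even'' branch is automatic: since each $|a_{i}|\le 1$, we have $\se(a_{1},\ldots,a_{n})\le\sum_{k}|a_{k}|\le n$, so $\se(a_{1},\ldots,a_{n})+(n-2)\le 2n-2$ holds unconditionally. Only the ``odd'' branch contributes a nontrivial constraint, yielding exactly $\so(a_{1},\ldots,a_{n})\le n-2$, which is \eqref{eq:CC_appendix}. There is no real obstacle in this argument; the only care required is in computing $\se(1,\ldots,1)$ and $\so(1,\ldots,1)$ from the parity constraint and in noting that the ``even'' branch is redundant, which is what allows the consolidation of the $2n$-argument criterion into the familiar $n$-argument inequality.
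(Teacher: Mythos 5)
Your proof is correct, and it reaches the paper's specialization $\so\bigl(a_{1},\ldots,a_{n},1,\ldots,1\bigr)\le 2n-2$ of \eqref{eq:master} in exactly the same way, but it finishes by a different mechanism. The paper disposes of the $n$ appended ones in one line by citing the closed-form identity
\[
\so\left(x_{1},\ldots,x_{k}\right)=\sum_{i=1}^{k}\left|x_{i}\right|-2[x_{1}\cdots x_{k}>0]\min\left(\left|x_{1}\right|,\ldots,\left|x_{k}\right|\right),
\]
from which $\so(a_{1},\ldots,a_{n},1,\ldots,1)=\so(a_{1},\ldots,a_{n})+n$ follows because appending ones adds $n$ to the absolute-value sum without changing the sign of the product or the minimum (each $|a_{i}|\le1$). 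You instead reuse Lemma~\ref{lem:s0_s1_split}, already proved and used in the induction for Theorem~\ref{thm:A1-An}, evaluate $\se(1,\ldots,1)=n$ and $\so(1,\ldots,1)=n-2$ directly from the parity constraint, and observe that the even branch $\se(a_{1},\ldots,a_{n})\le n$ is vacuous. The two routes cost about the same; yours has the small advantage of not introducing a new ``easily verifiable'' identity and of making explicit why only the odd-parity branch survives, while the paper's identity is the more powerful tool (it gives the exact value of the left-hand side rather than just the equivalence of the two inequalities). Note that your max-decomposition yields a priori only $\max\{\se(a)+n-2,\ \so(a)+n\}$, which could in principle exceed $\so(a)+n$; you correctly sidestep this by showing the first branch is unconditionally within the bound rather than claiming it is dominated, so no gap results.
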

\begin{proof}
For consistently-connected systems, the main criterion has the form
\begin{equation}
2n-2\geq\so\Big(\left\langle R_{1}^{1}R_{2}^{1}\right\rangle ,\ldots,\left\langle R_{n}^{n}R_{1}^{n}\right\rangle ,\underbrace{1,\ldots,1}_{n\textnormal{ times}}\Big).
\end{equation}
The form \ref{eq:CC_appendix} follows from the easily verifiable
general formula 
\[
\so\left(x_{1},\ldots,x_{k}\right)=\sum_{i=1}^{k}\left|x_{i}\right|-2[x_{1}\cdots x_{k}>0]\min\left(\left|x_{1}\right|,\ldots,\left|x_{k}\right|\right),
\]
where {[}...{]} is the Iverson bracket, equal to 1 if the predicate
within it is true, and zero otherwise,\end{proof}
\begin{cor}[to Theorem 4]
A cyclic consistently-connected system of rank $n=5$ with dichotomic
random variables and with 
\[
\Pr\left[R_{i}^{i}=1,R_{i\oplus1}^{i}=1\right]=0,\quad i=1,\ldots,5,
\]
has a maximally noncontextual description if and only if the original
KCBS inequality holds, 
\begin{equation}
K=\sum_{i=1}^{5}p_{i}\leq2,\label{eq:Klyachko inequality}
\end{equation}
where $p_{i}=\Pr\left[R_{i}^{i}=1\right]=\Pr\left[R_{i}^{i\ominus1}=1\right]$,
$i=1,\ldots,5$. 
\end{cor}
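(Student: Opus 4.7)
The plan is to invoke Corollary A10 and convert its criterion $\so(\langle R_i^iR_{i\oplus 1}^i\rangle:i=1,\ldots,5)\le 3$ into a direct inequality in the $p_i$'s. Consistent connectedness gives $\Pr[R_{i\oplus 1}^i=1]=p_{i\oplus 1}$, so combined with the hypothesis $\Pr[R_i^i=1,R_{i\oplus 1}^i=1]=0$ the bivariate table of $(R_i^i,R_{i\oplus 1}^i)$ is fully determined, with remaining cells $\Pr(+,-)=p_i$, $\Pr(-,+)=p_{i\oplus 1}$, $\Pr(-,-)=1-p_i-p_{i\oplus 1}$. Nonnegativity of the last cell yields the automatic constraint $p_i+p_{i\oplus 1}\le 1$, which I will use freely below. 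A direct computation then gives $\langle R_i^i R_{i\oplus 1}^i\rangle=1-2(p_i+p_{i\oplus 1})=:x_i$, so the existence of a maximally noncontextual description is equivalent to $\so(x_1,\ldots,x_5)\le 3$.

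For necessity, I would pick the sign vector $\iota=(-1,-1,-1,-1,-1)$, which has $\prod_k\iota_k=-1$ and so contributes to the maximum in \eqref{eq:function so}. Its sum equals $-\sum_k x_k=4K-5$, so $\so(x_1,\ldots,x_5)\ge 4K-5$, forcing $K\le 2$.

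For sufficiency, assume $K\le 2$ and verify $\sum_k\iota_k x_k\le 3$ for every admissible $\iota$, grouped by the number of $-1$'s (which must be $1$, $3$, or $5$). The five-$-1$ case is exactly the bound $K\le 2$ just shown. A single $-1$ at position $j$ gives $\sum_k x_k-2x_j=3-4\bigl(K-p_j-p_{j\oplus 1}\bigr)\le 3$, since $K-p_j-p_{j\oplus 1}=\sum_{i\notin\{j,j\oplus 1\}}p_i\ge 0$. Three $-1$'s on a subset $S\subset\{1,\ldots,5\}$ of size $3$ simplify, using $\sum_k x_k=5-4K$, to $\sum_i(c_i-1)p_i\le 1$, where $c_i:=|S\cap\{i,i\ominus 1\}|\in\{0,1,2\}$ and $\sum_i c_i=6$. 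Up to $C_5$-rotation there are just two orbit representatives for $S$: three consecutive indices $\{a,a\oplus 1,a\oplus 2\}$ (where the inequality becomes $p_{a\oplus 1}+p_{a\oplus 2}-p_{a\oplus 4}\le 1$, handled by $p_{a\oplus 1}+p_{a\oplus 2}\le 1$ and $p_{a\oplus 4}\ge 0$), and a consecutive pair plus an isolated index $\{a,a\oplus 1,a\oplus 3\}$ (where the inequality collapses to $p_{a\oplus 1}\le 1$).

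The only real effort is the three-$-1$ case analysis, and even there the $C_5$-rotational symmetry cuts the ten candidate $S$'s down to two trivial sub-cases. Hence $\so(x_1,\ldots,x_5)\le 3$ is equivalent to the single bound $K\le 2$, which is the KCBS inequality.
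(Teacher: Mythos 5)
Your proposal is correct and follows the same route as the paper: reduce to Corollary \ref{eq:CC_appendix} via $\left\langle R_{i}^{i}R_{i\oplus1}^{i}\right\rangle =1-2\left(p_{i}+p_{i\oplus1}\right)$, then check the sign vectors with $1$, $3$, and $5$ negative coefficients against the constraints $p_{i}\geq0$ and $p_{i}+p_{i\oplus1}\leq1$. The paper only sketches this case analysis, whereas you carry it out explicitly (and correctly, including the reduction of the ten three-element subsets to two rotation orbits).
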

The expression \eqref{eq:Klyachko inequality} for $K$ follows from
\eqref{eq:CC_appendix} and $\left\langle R_{i}^{i}R_{i\oplus1}^{i}\right\rangle =1-2\left(p_{i}+p_{i\oplus1}\right)$.
The proof that (\ref{eq:Klyachko inequality}) and (\ref{eq:CC_appendix})
are equivalent is obtained by considering linear combinations $L=\sum_{k=1}^{5}\iota_{k}\left[1-2\left(p_{k}+p_{k\oplus1}\right)\right]$
with 1, 3, and 5 negative $\iota_{k}$'s, in accordance with the definition
of $\so$, and showing that paired inequalities $L>3,K\leq2$ and
$L\leq3,K>2$ have no solutions satisfying $p_{i}\geq0$ and $p_{i}+p_{i\oplus1}\leq1$,
$i=1,\ldots,5$.

\end{document}